\renewcommand\footnotetextcopyrightpermission[1]{}
\newcommand{\labelthis}[1]{ \stepcounter{equation}\tag{\theequation}\label{#1}}
\newcommand{\edit}[1]{#1}
\newcommand{\true}{\ensuremath{\mathsf{T}}\xspace} 
\newcommand{\false}{\ensuremath{\mathsf{F}}\xspace} 
\newcommand{\dist}{\ensuremath{\mathcal{D}}\xspace}
\newcommand{\e}{\ensuremath{\mathop{\mathbb{E}}}\xspace} %
\newcommand{\prob}{\ensuremath{\mathsf{Pr}}\xspace} %
\newcommand{\den}{\ensuremath{\mathsf{p}}\xspace} %
\DeclareMathOperator*{\argmin}{arg\,min}
\DeclareMathOperator{\var}{Var}
\DeclareMathOperator{\cov}{Cov}
\newcommand{\coco}{\textsc{CoCo}}
\newcommand{\sys}{\ensuremath{s}\xspace} %
\newcommand{\sysmod}{\ensuremath{\hat{s}}\xspace} %
\newcommand{\sysreal}{\ensuremath{s^*}\xspace} %
\newcommand{\states}{\ensuremath{X}\xspace} %
\newcommand{\statevec}{\ensuremath{\boldsymbol{x}}\xspace} %
\newcommand{\statevecs}{\ensuremath{\boldsymbol{X}}\xspace} %
\newcommand{\obss}{\ensuremath{Y}\xspace} %
\newcommand{\obs}{\ensuremath{y}\xspace} %
\newcommand{\obsvec}{\ensuremath{\boldsymbol{y}}\xspace} %
\newcommand{\obsvecs}{\ensuremath{\boldsymbol{Y}}\xspace} %
\newcommand{\ctrl}{\ensuremath{h}\xspace} %
\newcommand{\acts}{\ensuremath{U}\xspace} %
\newcommand{\dynm}{\ensuremath{f_d}\xspace} %
\newcommand{\dynms}{\ensuremath{F_d}\xspace} %
\newcommand{\measm}{\ensuremath{f_m}\xspace}
\newcommand{\measms}{\ensuremath{F_m}\xspace} %
\newcommand{\prop}{\ensuremath{\varphi}\xspace} %
\newcommand{\assn}{\ensuremath{A}\xspace} %
\newcommand{\verif}{\ensuremath{V}\xspace} %
\newcommand{\propform}{\ensuremath{\psi}\xspace} %
\newcommand{\mon}{\ensuremath{M}\xspace} %
\newcommand{\safe}{\ensuremath{\Phi}\xspace} %
\newcommand{\compassn}{\ensuremath{\assn_\propform}\xspace} %
\newcommand{\compfun}{\ensuremath{C}\xspace} %
\newcommand{\compmon}{\ensuremath{\mon_\compfun}\xspace} %
\newcommand{\ecerr}{\ensuremath{ECE}\xspace}%
\newcommand{\mcerr}{\ensuremath{MCE}\xspace}
\newcommand{\conserr}{\ensuremath{CCE}\xspace}%
\newcommand{\indep}{\ensuremath{\bot}\xspace}%
\newcommand{\eat}[1]{}
\newtheorem*{rep@theorem}{\rep@title}
\newcommand{\newreptheorem}[2]{%
\newenvironment{rep#1}[1]{%
 \def\rep@title{#2 \ref{##1}}%
 \begin{rep@theorem}}%
 {\end{rep@theorem}}}
\def\smallint{\begingroup\textstyle \int\endgroup}
\newcommand{\psize}{\small}
\newcommand{\cex}[1]{E_{#1}}
\newcommand{\eecerr}{\ensuremath{\widehat{\ecerr}}\xspace}%
\newcommand{\emcerr}{\ensuremath{\widehat{MCE}}\xspace}
\newcommand{\econserr}{\ensuremath{\widehat{CCE}}\xspace}%
\newcommand{\ebrier}{\ensuremath{\widehat{Brier}}\xspace} %
\newcommand{\emon}{\ensuremath{\widehat{\mon}}\xspace} %
\newcommand{\eassn}{\ensuremath{\widehat{\assn}}\xspace} %
\newcommand{\esafe}{\ensuremath{\widehat{\safe}}\xspace} %
\newcommand{\eauc}{\ensuremath{\widehat{AuC}}\xspace}
\newtheorem{theorem}{Theorem}
\newtheorem{corollary}[theorem]{Corollary}
\newtheorem{definition}{Definition}
\newtheorem{lemma}{Lemma}
\newtheorem*{lemma*}{Lemma}
\newtheorem*{problem*}{Problem}
\begin{document}

\title{Confidence Composition for \\
Monitors of Verification Assumptions}

\author{Ivan Ruchkin}
\affiliation{%
  \institution{University of Pennsylvania}
  \city{Philadelphia}
  \state{Pennsylvania}
  \country{}
  \postcode{19104}}

\author{Matthew Cleaveland}
\affiliation{%
  \institution{University of Pennsylvania}
  \city{Philadelphia}
  \state{Pennsylvania}
\country{}
  \postcode{19104}}
  
\author{Radoslav Ivanov}
\affiliation{%
  \institution{Rensselaer Polytechnic Inst.}
  \city{Troy}
  \state{New York}
\country{}
  \postcode{19104}}
  
\author{Pengyuan Lu}
\affiliation{%
  \institution{University of Pennsylvania}
  \city{Philadelphia}
  \state{Pennsylvania}
\country{}
  \postcode{19104}}

\author{Taylor Carpenter}
\affiliation{%
  \institution{University of Pennsylvania}
  \city{Philadelphia}
  \state{Pennsylvania}
\country{}
  \postcode{19104}}
  
\author{Oleg Sokolsky}
\affiliation{%
  \institution{University of Pennsylvania}
  \city{Philadelphia}
  \state{Pennsylvania}
\country{}
  \postcode{19104}}

\author{Insup Lee}
\affiliation{%
  \institution{University of Pennsylvania}
  \city{Philadelphia}
  \state{Pennsylvania}
\country{}
  \postcode{19104}}

\renewcommand{\shortauthors}{Ruchkin et al.}

\begin{abstract}
\looseness=-1
Closed-loop verification of cyber-physical systems with neural network controllers offers strong safety guarantees under certain assumptions. It is, however, difficult to determine whether these guarantees apply at run time because verification assumptions may be violated. To predict safety violations in a verified system, we propose a three-step \edit{confidence composition (\coco)} framework for monitoring verification assumptions. First, we represent the sufficient condition for verified safety with a propositional logical formula over assumptions. Second, we build calibrated confidence monitors that evaluate the probability that each assumption holds. Third, we obtain the confidence in the verification guarantees by composing the assumption monitors using a composition function suitable for the logical formula. \edit{Our \coco} framework provides theoretical bounds on the calibration and conservatism of compositional monitors. \edit{Two case studies show that compositional monitors are calibrated better than} their constituents and successfully predict safety violations.
\end{abstract}

\keywords{confidence composition, run-time monitoring, calibration error, closed-loop verification of neural networks, cyber-physical systems}

\maketitle

\renewcommand{\state}{\ensuremath{x}\xspace} %

\section{Introduction}

\looseness=-1
Autonomous cyber-physical systems, such as self-driving cars and service robots, are increasingly deployed in complex and safety-critical environments in our society~\cite{watanabe_runtime_2018,schumann_model-based_2019,fisac_probabilistically_2018}. Recently, the breakthrough capabilities in handling such environments came from the use of learning components, which may behave unpredictably. To consistently rely on such capabilities, one needs to ensure that the system would not to endanger the lives and property around it, or at least that an early enough warning is given to avert the disaster. %

\looseness=-1
When assuring a complex cyber-physical system, one can obtain strong safety guarantees from closed-loop reachability verification, recently extended to explicitly check neural network (NN) controllers~\cite{ivanov19,tran_nnv_2020}. To provide its guarantees, the verification relies on assumptions about system's dynamics, perception, and environment. Should the system find itself in circumstances not matching these assumptions, the verification guarantees are void --- and remarkably difficult to re-obtain at run time due to limited scalability.  

\looseness=-1
On another front, many run-time monitoring techniques were developed to detect anomalies, such as model inconsistencies and  out-of-distribution samples  ~\cite{boursinos_assurance_2021,carpenter_modelguard_2021,park_pac_2021}. These tools can provide valuable situational insights, but their outputs often lack a direct connection to the verification guarantees or system-level safety. For example, it is not clear to which extent an out-of-distribution image of a stop sign invalidates a collision-safety guarantee for an autonomous car. %

Thus, it is both challenging and important to quantify and monitor the trust in design-time verification guarantees at run time. In particular, it is vital to know when the guarantees no longer apply, so as to switch to a backup controller, execute a recovery maneuver, or ask for human assistance. The monitoring of verification guarantees has the potential to predict otherwise unforeseen failures in situations for which the system was not trained or designed.%

\looseness=-1
To monitor verification guarantees, we propose quantifying the \emph{confidence} in the \emph{assumptions} of verification. By confidence we mean an estimate of the probability that the assumption holds. Although an assumption may not be directly observable, its monitor would over time  accumulate confidence, which, if properly calibrated, \edit{would be} close to the true chance of satisfying the assumption given the observations. If all the assumptions are satisfied, our verification retroactively guarantees safety. \edit{Such  assumptions can be monitored} with off-the-shelf techniques~\cite{carpenter_modelguard_2021,dellaert99,wald04,poor13,scharf91}, and \edit{their confidences} would be combined into \edit{a} single confidence in the guarantees. In a safety-critical system, this confidence should not be over-estimated.

This paper introduces the \edit{\coco} \emph{framework for \underline{co}mposing \underline{co}nfi-dences} from monitors of verification assumptions,  consisting of three steps: (i) \emph{verify} the system under explicit assumptions, such that a propositional formula over these assumptions entails the system's safety, (ii) build a well-calibrated \emph{confidence monitor} for each assumption, (iii) use a \emph{composition function} informed by the formula from the first step to combine the monitor outputs into a composed confidence. This confidence quantifies the chance that the verification guarantees apply at that moment. 

We develop the theoretical conditions under which the composed confidence is calibrated and conservative, up to a bounded error, with respect to the true probability of safety. These conditions are that (a) the system model under verification can explain most safe behaviors, (b) a violation of assumptions would likely lead to a failure, and (c) the composition function is calibrated/conservative with respect to the assumptions. We also prove calibration error bounds for two composition functions --- product and weighted average --- and a conservatism bound for the product. 

\looseness=-1
We evaluate \edit{\coco} on \edit{two systems} with NN controllers: a mountain car and an underwater vehicle. Experiments show that our compositional monitors are useful for safety prediction, outperform the individual monitors, and can be tuned \edit{for conservatism}. Our data-driven composition functions improve the performance further if provided \edit{the data relating the monitors and the assumptions}.  

To summarize, this paper makes \edit{four} contributions:
\begin{itemize}
\looseness-1
    \item The \edit{\coco}~framework for \edit{composing} confidence monitors of verification assumptions \edit{with} five composition functions.
    \looseness-1
    \item Sufficient conditions for bounded calibration of composite confidence to the safety chance, \edit{with} the expectations from models, assumptions, monitors, and composition functions.  
    \item Upper bounds on the expected calibration error of two composition functions, and the conservatism error of one. 
    \item \edit{Two case studies that demonstrate} the utility of the framework and the trade-offs between composition functions. 
\end{itemize}

The rest of the paper proceeds as follows. The necessary background on verification and monitoring is given in the next section. \Cref{sec:related} surveys the existing research. \Cref{sec:fwk} presents the key contributions: the framework, the end-to-end calibration conditions, and bounds on the errors of composition functions. \Cref{sec:eval} describes two case studies and the experimental results. The paper wraps up with a brief discussion in \Cref{sec:discuss}.

\section{Background}
\label{sec:background}

\looseness=-1
Here we describe the preliminaries of verification and monitoring.

\subsection{Verification} \label{sec:background-verif}

\begin{definition}[System]\label{def:sys}
 A \emph{system} $\sys = (\states, \states_0, \obss, \acts, \ctrl, \dynms, \measms)$ %
consists of the following elements: 
\end{definition}
\begin{itemize}
    \item  State space \states: continuous, \edit{unbounded, finite-dimensional}, containing states \state \edit{(which include the discrete time)}%
    \item Initial states $\states_0 \subset \states$
    \item Observation space \obss, containing observations \obs %
    \item Action space \acts
    \item Controller $\ctrl: \obss \rightarrow \acts$, implemented with a neural network %
    \item Dynamics models $\dynms$: a set of functions $\dynm: \states, %
    \acts  \rightarrow \states$ %
    \item Measurement models $\measms$: a set of functions $\measm: \states%
    \rightarrow \obss$ %
\end{itemize}

A system determines a set of state traces $\statevecs(\sys)$ and a set of observation traces $\obsvecs(\sys)$ resulting from executing every combination of functions from $\dynms \times \measms$ on every initial state in $\states_0$ indefinitely. A particular realization of a system is a pair of state and observation vectors \statevec, \obsvec that occur for specific $\state_0 \in \states_0, \dynm \in \dynms$, and $\measm \in \measms$.

\edit{A \emph{safety property} $\prop$ is a Boolean predicate over traces: $\prop(\statevec) \in \{\true, \false\}$.
A property \prop is satisfied on system \sys, denoted $\sys \models \prop$, iff every trace from that system satisfies \prop: $\forall \statevec \in \statevecs(\sys),\prop(\statevec) = \true$.} \edit{Thus, this paper considers arbitrary deterministic temporal safety properties.}

\looseness=-1
A \emph{verification assumption} \edit{\assn is a restriction on the system's ``unknowns'' ---} the states and models of the system --- so, $\assn \subseteq \states \times \dynms \times \measms $. The assumption \edit{holds} on a trace $(\statevec, \obsvec)$ if for any combination of $\state_0, \dynm,$ and $ \measm$ that realizes this trace it is true that $(\state_0, \dynm,\measm) \in \assn$. %
When assumptions $\assn_1 \dots \assn_n$ are combined with a propositional logical formula \propform,  $\assn_\propform = \propform(\assn_1 \dots \assn_n)$ is also an assumption. The meaning of propositional operators $(\land, \lor,\neg,  \implies)$ is defined by the corresponding set operations (intersection for $\land$, union for $\lor$, etc). %
A system $\sys = (\states, \states_0, \obss, \acts, \ctrl, \dynms, \measms)$ \emph{under assumption} $\assn = (\states_0', \dynms'$, $\measms')$, denoted as $\sys_\assn$, is an intersection of the initial states and respective models: $ \sys_\assn = (\states, \states_0 \cap \states_0' , \obss, \acts, \ctrl, \dynms \cap \dynms', \measms \cap \measms')$.

\looseness=-1
For a given system \sys, a verification result of property \prop, denoted as $\verif_{\sys,\prop}$, is a function that maps any assumption \assn to \{ \true, \false \}. It represents the outcome of a verification effort \edit{under that assumption}, regardless of \edit{the exact} method. Value \true is assigned only if \prop was guaranteed by the verification algorithm, whereas \false is assigned in all the other cases (counterexample exists, uncertainty too high, time limit reached, etc). 
Since verification is over-approximate and exhaustive, it never produces a false safety outcome: $\verif_{\sys,\prop}(\assn) = \true \implies \sys_\assn \models \prop$. Such an assumption \assn is called \emph{sufficient} for \prop.

\subsection{Confidence Monitoring}\label{sec:backmon}

\looseness=-1
\edit{Intuitively, we want to compute the \emph{confidence} in (i.e., an estimate of the probability of) the system satisfying a safety property  \prop in the future given a prefix of observations $\obsvec$. Confidences are computed by CPS monitors in uncertain conditions, when the exact state, dynamics, and measurement model are not known. Therefore, we represent the selection of the actual system as a random sampling of  the system's unknowns --- the initial state $x_0$, dynamics $f_d$, and measurement function $f_m$ --- from some unknown distribution \dist over \states, \dynms, and \measms.}

\edit{Once we fix the distribution \dist, it induces the distribution $\dist_{\statevec, \obsvec}$ on the system's realization $(\statevec,\obsvec)$. Therefore, our monitoring goal is to estimate the probability of safety given observations up to the current moment, namely $\prob_{(\statevec,\obsvec) \sim \dist_{(\statevec,\obsvec)}}\big(\prop(\statevec) = \true \mid \obsvec_{1..n}\big)$, where $\obsvec_{1..n}$ means the first $n$ elements of \obsvec.} \edit{We pursue this goal by monitoring confidence in assumptions. Since an assumption \assn can be seen as a predicate over random $(\state_0, \dynm, \measm) \sim \dist$, its satisfaction is also random: $\assn \sim \mathcal{D}_\assn$, where $\mathcal{D}_\assn$ is induced by \dist.}

\looseness=-1
\edit{A confidence monitor $\mon: \obsvecs \rightarrow [0,1]$ for assumption \assn takes $\obsvec_{1..n}$ and outputs its estimate of $\prob_{\obsvec \sim \mathcal{D}_{\obsvec}} (\obsvec_{1..n} \in \obsvecs(\sys_\assn))$, that is, its degree of belief that the observations came from a system where \assn holds. The monitor's output, $\mon(\obsvec)$, is stochastic because it depends on \obsvec.} 
\edit{Since monitors estimate probabilities, we measure the quality of monitors using three types of calibration error with respect to \assn: }
\edit{\begin{itemize}
    \item \emph{Expected calibration error} (\ecerr):   
    $$ \ecerr(\mon, \assn) := \e_{\obsvec \sim \mathcal{D}_{\obsvec}} [| \prob_{\assn \sim \mathcal{D}_\assn}\big(\assn \mid \mon(\obsvec)\big) - \mon(\obsvec) |] $$ 
    \item \emph{Maximum calibration error} (\mcerr):
    $$ \mcerr(\mon, \assn) := \max_{p \in [0,1]} [| \prob_{\assn \sim \mathcal{D}_{\assn},\obsvec \sim \mathcal{D}_{\obsvec}}\big(\assn \mid \mon(\obsvec) = p\big) - p |] $$ 
    \item \emph{Conservative calibration %
    error (\conserr):}
    $$  \conserr(\mon, \assn) := \max_{p \in [0,1]} [p - \prob_{\assn \sim \mathcal{D}_{\assn},\obsvec \sim \mathcal{D}_{\obsvec} }\big(\assn \mid \mon(\obsvec) = p\big)] $$ 
\end{itemize}
}

\edit{\ecerr and \mcerr are widely used measures of calibration~\cite{naeini_obtaining_2015,guo17}. The concept of \conserr is novel --- we introduce it to asymmetrically quantify safety in critical systems: false alarms are safe, but missed alarms are not. \mcerr is the strictest measure of monitor quality because $\mcerr \geq \ecerr$ and $ \mcerr \geq \conserr$. When $\ecerr=0$, the monitor is \emph{calibrated in expectation}. When $\mcerr = 0$, the monitor is \emph{perfectly calibrated}. When $\conserr \leq 0$, the monitor is \emph{conservative}.}

\edit{For brevity, we will omit the distributions when they are clear from the context and refer to monitor output $\mon(\obsvec)$ as just \mon. We assume that this output is characterized by a continuous probability density $\prob(\mon)$ with finite expectation $\e[\mon]$ and variance $\var[\mon]$.}

\section{Related Work}\label{sec:related}

The \edit{research related to this paper} spans several areas: detection and estimation, aggregation of probabilities, run-time monitoring and assurance, and assumption monitoring.

Anomaly detection is a well-studied problem in  control theory and signal processing. In particular, there are multiple well-written books on sequential detection and estimation~\cite{wald04,poor13,scharf91}. %
We rely on two groups of such methods. First, filtering and parameter estimation can be used to implement the monitors that estimate states and noise parameters. Specifically, we implemented monitors using standard Monte Carlo and particle filtering~\cite{dellaert99}.
Second, monitoring model validity is related to classical change detection~\cite{willsky76} as well as more recent computational model validation methods~\cite{ozay14,prajna06,carpenter_modelguard_2021}, one of which we use to monitor assumptions on our model. Unlike classical detection methods, our work focuses on calibration~\cite{guo17}, i.e., the faithfulness of the detector to the true frequency of the underlying event. While classical model-based detectors~\cite{willsky76} can be considered ``calibrated'' by design under distributional assumptions (since probabilities can be explicitly computed), recent computational approaches~\cite{carpenter_modelguard_2021} and classical detectors under unknown distributions are in essence ``black-box'' and need to be calibrated post-hoc in order to provide performance guarantees.

Combining probability estimates is well-studied in statistics and artificial intelligence~\cite{ranjan_combining_2010,winkler_averaging_2018,sagi_ensemble_2018}. In a typical setting, such as forecast aggregation or ensemble learning, the combined methods estimate the probability of the \emph{same} underlying phenomenon. In contrast, our monitors predict fundamentally \emph{different} assumptions combined with logical operators. This setting can be interpreted as probabilistic graphical models with calibration constraints (as opposed to factor weights or conditional probabilities)~\cite{koller_probabilistic_2009}, and our product composition corresponds to the noisy-OR graphical model~\cite{pearl_probabilistic_1988}. Copulas~\cite{nelsen_introduction_2006} encode low-dimensional joint distributions with given marginals and can be used to model dependencies between verification assumptions, which we have so far assumed conditionally independent. Broadly, the literature on combining probabilities inspires the functions we use for confidence composition. %

Confidence is emerging as a key concept for expressing uncertainty in learning-enabled systems in \edit{such scenarios as detecting objects}~\cite{boursinos_improving_2020} and anticipating human motion~\cite{fisac_probabilistically_2018}. Confidences can be endowed with distributions to enable effective and general inference~\cite{shen_prediction_2018}. However, \edit{the poor} calibration of \edit{confidences} remains a major issue, especially for neural networks~\cite{guo17,van_calster_calibration_2019}. We study the calibration of black-box monitors in a safety-critical compositional setting without detailed assumptions on confidence distributions. 

Run-time monitoring is increasingly important in assurance of cyber-physical systems, with multiple run-time assurance frameworks proposed recently~\cite{desai_soter_2019,asaadi_quantifying_2020,schumann_model-based_2019,boursinos_assurance_2021,tran_decentralized_2019,mitsch_modelplex:_2014}. Some of them focus on safety\edit{-preserving} decision-making rather than accurate monitoring~\cite{desai_soter_2019,schumann_model-based_2019,cailliau_runtime_2019}. Others focus on specifying and monitoring safety properties in Linear/Metric/Signal Temporal logic with uncertainty~\cite{watanabe_runtime_2018,stoller_runtime_2012,cimatti_assumption-based_2019}, whereas we indirectly predict the satisfaction of safety properties. \edit{Yet} others provide well-calibrated confidence with non-compositional techniques such as conformal prediction~\cite{boursinos_assurance_2021} and dynamic Bayesian networks~\cite{asaadi_quantifying_2020} --- and thus can be incorporated into our framework as individual monitors. A \edit{closely} related recent framework is ReSonAte~\cite{hartsell_resonate_2021}, based on representing risks with bowtie diagrams (a counterpart of our propositional formulas) and estimating risk by using conditional distributions between system states and failures. \edit{ReSonAte's} approach is analogous to our Bayesian composition, which learns a joint distribution of monitors conditioned on assumptions --- \edit{and} naturally requires joint monitor samples or additional\edit{,} strong independence assumptions.

\looseness=-1
Assumptions have long been \edit{considered a potential cause of failures in safety-critical systems}~\cite{petroski_engineer_1992,fu_uacfinder_2020,saqui-sannes_making_2016}. %
The probabilistic and compositional formalization of assumptions is most common in frameworks for assume-guarantee reasoning and compositional verification~\cite{kwiatkowska_compositional_2013,elkader_automated_2015,ruchkin_compositional_2020}. \edit{Our paper investigates} a complementary direction of \edit{connecting} the guarantees of closed-loop neural network verification~\cite{ivanov_verisig_2021} with uncertain and imprecisely modeled run-time environments \edit{by using assumptions as an ``interface'' between the two.}
Prior works have pioneered assumption monitoring in model-based, non-deterministic settings: explicitly specified monitors~\cite{sokolsky_monitoring_2016} and monitors of proof obligations with partially observable variables~\cite{mitsch_verified_2018,cimatti_assumption-based_2019}. These model-based approaches have the advantage of verifying monitors \edit{within} the semantics of their respective models. \edit{Pursuing our vision of compositional confidence-based assurance}~\cite{ruchkin_confidence_2021}, \edit{this paper extends} confidence monitoring of assumptions to a setting where monitors do not conform to any given semantics and can exhibit unknown \edit{stochastic} behavior. %

\section{\hspace{-1.3mm}Confidence Composition Framework}\label{sec:fwk}

Our \edit{\coco} framework uses the following intuition. 
Suppose that we have monitors $\mon_1 \dots \mon_n$ for assumptions $\assn_1 \dots \assn_n$, some combination of which, \compassn, is sufficient for, and thus predictive of, safety. From $\mon_1 \dots \mon_n$, we can build a compositional monitor \compmon of $\assn_\propform$ that will estimate $\prob(\compassn)$, which is used as an indirect estimate of the chance of safety. Our composite monitor $\compmon: [0,1]^n \rightarrow [0,1]$ has form $\compfun(\mon_1 \dots \mon_n)$, where \compfun is a \emph{composition function} selected depending on \propform. Our framework formalizes the argument that if safety depends on the assumptions that have monitors with bounded \ecerr (or \conserr), %
then an appropriate compositional monitor will 
have bounded \ecerr (\conserr resp.) error with respect to the safety chance. 
This argument needs to account for model inaccuracies, overly conservative assumptions, and imperfect monitors. %

Our framework imposes certain requirements on the models, explained in the next subsection, \edit{and proceeds in three steps:} %
\begin{enumerate}
    \item Perform verification and elicit the assumptions  sufficient for safety (\Cref{sec:verifassn})
    \item Build and calibrate a confidence monitor for each assumption (\Cref{sec:confmon})
    \item Compose monitors using a composition function with desirable  bounds on the calibration error (\Cref{sec:compconf})
\end{enumerate}

In each step, we identify the framework's requirements and briefly outline how they can be achieved. \Cref{sec:endtoend} capitalizes on these requirements by providing end-to-end bounds that link the composed confidence and the true chance of safety.%

\subsection{Model Requirements}\label{sec:modelreq}

From the modeling standpoint, we distinguish two subsystems of the overarching system \sys: the unknown, true, ``real'' subsystem \sysreal %
and the modeled, known subsystem \sysmod that will undergo verification. %
Systems \sysreal and \sysmod are \textit{compatible}, i.e., they share the same \states, \obss, \acts, and \ctrl. The latter can be shared because we explicitly encode and verify the NN controller in our model. We fix some safety property \prop and verification method $\verif_{\sysmod, \prop}$ \edit{and introduce} the notion of \emph{safety relevance} between two compatible models.

\begin{definition}[\edit{Safety-relevant model}]
\label{def:safety-relevant}
A model $\sys_1$ is \emph{safety-relevant} up to a bound $e$ for a compatible model $\sys_2$ if it accounts for the safe behaviors of $\sys_2$ %
most of the time.  Formally, for a random state trace \statevec, safety property \prop, and some small $e \in [0,1]$,
$$ \prob\big( \prop( \statevec ) = \true \mid \statevec \in \statevecs(\sys_2 ) \land \statevec \not\in \statevecs(\sys_1) \big) \leq e. $$
\end{definition}

We expect the system model \sysmod to be safety-relevant for the real system \sysreal and in that case just say it is safety-relevant.  We also expect \sysmod to be \emph{verifiable}.

\begin{definition}[Verifiable model]%

A model \sysmod is \emph{verifiable} if there is a non-trivial assumption \assn (i.e., containing many states and/or models) such that all traces of $\sysmod_{\assn}$ are safe: 
    $$\exists \assn, |\assn| > 0 \land \verif_{\sysmod,\prop}(\assn) = \true.   $$ 
\end{definition}

Safety relevance intuitively means that we are unlikely to get a safe trace not represented by our model. This gives verification an opportunity to verify a system that ``explains'' a large part of \edit{truly} safe behaviors. Then, failing \edit{the} verification would  %
correspond to a low true chance of safety. %
Without safety relevance, whether verification holds may be orthogonal to whether the system is safe. 

Verifiability of a sizeable set of assumptions is important because if the model is safe only under trivial assumptions, few observed traces would satisfy \edit{them}. Then, the monitors would be forced to alarm perpetually and, hence, poorly predict safety. For instance, if we verified a system model only with zero measurement noise, a monitor would almost always invalidate this model on a real system. %
Verifiability is challenging to achieve due to the scalability and uncertainty limits of over-approximating reachability algorithms.  

There is a trade-off between safety relevance and verifiability: expanding the set of explained behaviors  %
leads to more parameters and a larger scope of the model, making it harder to verify. In the case studies, we negotiated
this trade-off by starting with simple verifiable low-dimensional models and iteratively extending their relevance while preserving their verifiability.

\subsection{Verification Assumptions}\label{sec:verifassn}

\eat{
The goal in this step is to develop verification assumptions \assn with three characteristics: 
\begin{itemize}
    \item \emph{Sufficient:}  these assumptions \assn are sufficient to obtain the verification guarantee on 
    \sysmod: 
    $$ \verif_{\sysmod,\prop}(\assn) = \true $$
   \item \emph{Safety-relevant:} The assumptions are safety-relevant: their violation leads to violating the safety with high probability.  
    $$\prob( \sysmod_{\neg \assn} \models \prop ) \leq e_2 $$
    \item \emph{Monitorable decomposition:} there exists a decomposition of \assn into $\assn_1 \dots \assn_n$ such that (i) a propositional formula \propform ensures the original assumption: $ \propform(\assn_1 \dots \assn_n) \implies \assn $, and (ii) for each sub-assumption $\assn_i$, there exists a monitor $\mon_i$ with a (preferably tight) bound on \ecerr and \conserr.  %
    
\end{itemize}
}

Assumptions are made to  constrain the modeled system \sysmod and pass safety verification. For assumption monitoring to be useful, we expect our assumptions to be \emph{sufficient} and \emph{safety-relevant}. %
\begin{definition}[Sufficient assumption]
An assumption \assn is \emph{sufficient} for property \prop if the verification of \prop on \sysmod succeeds for \assn: %
$$ \verif_{\sysmod,\prop}(\assn) = \true $$
\end{definition}

\begin{definition}[Safety-relevant assumption]
An assumption \assn applied to \sysmod is \emph{safety-relevant} up to some bound $e\in[0,1]$ if the subsystem $\sysmod_{\assn}$ is safety-relevant up to $e$ for \sysmod as per \Cref{def:safety-relevant}.

\end{definition}

\looseness=-1
\edit{The} sufficiency enables the \edit{system's} verification and is typically straightforward to achieve via grid search. We \edit{partition} the \edit{joint} space of initial states and model parameters into \edit{hypercubes}, \edit{optionally} simulate the model to \edit{quickly} rule out the unsafe regions, and then verify each remaining hypercube in parallel. The union of hypercubes where $\verif_{\sysmod,\prop} = \true$ then becomes our sufficient assumption \assn.

Notice that the safety-relevance of assumptions is defined analogously to the safety-relevance of the models, and for the same reason: we want assumption failures to correlate with safety failures. The combined safety-relevance of the models and its assumptions gives us a bound on the chance of true safety: %

\begin{theorem}[Bounded safety under failed assumptions]\label{thm:safebound}
If model \sysmod is safety-relevant up to $e_1$ and assumption \assn is %
sufficient and  safety-relevant \edit{for \sysmod} up to $e_2$, then the safety chance %
under violated assumptions is bounded by $e_1 + e_2$. I.e.,  %
for a random trace $\statevec \in \statevecs(\sysreal)$, 
$$\prob\big(\prop(\statevec) = \true \mid \statevec \not\in \statevecs(\sysmod_\assn)\big) \leq e_1 + e_2.$$
\end{theorem}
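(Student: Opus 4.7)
The plan is to reduce the conditional probability on the left-hand side to the two hypotheses by partitioning the event $\boldsymbol{x} \notin \boldsymbol{X}(\hat{s}_A)$ into a pair of disjoint sub-events whose conditional safety probabilities are already bounded by $e_1$ and $e_2$. Concretely, since $\hat{s}_A$ is a restriction of $\hat{s}$, we have $\boldsymbol{X}(\hat{s}_A) \subseteq \boldsymbol{X}(\hat{s})$, so the event $\boldsymbol{x} \notin \boldsymbol{X}(\hat{s}_A)$ decomposes into the disjoint union of $E_1 := \{\boldsymbol{x} \notin \boldsymbol{X}(\hat{s})\}$ and $E_2 := \{\boldsymbol{x} \in \boldsymbol{X}(\hat{s}),\, \boldsymbol{x} \notin \boldsymbol{X}(\hat{s}_A)\}$.

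Next, I would apply the law of total probability with respect to this partition. Writing $w_i := \Pr(E_i \mid \boldsymbol{x} \notin \boldsymbol{X}(\hat{s}_A))$, we get
\begin{equation*}
\Pr\!\bigl(\varphi(\boldsymbol{x}) = \true \mid \boldsymbol{x} \notin \boldsymbol{X}(\hat{s}_A)\bigr)
= w_1 \Pr\!\bigl(\varphi = \true \mid E_1\bigr) + w_2 \Pr\!\bigl(\varphi = \true \mid E_2\bigr),
\end{equation*}
with $w_1 + w_2 = 1$. Recalling that $\boldsymbol{x}$ is drawn from $\boldsymbol{X}(s^*)$, the safety-relevance of $\hat{s}$ up to $e_1$ (Definition~\ref{def:safety-relevant} with $s_2 = s^*$, $s_1 = \hat{s}$) gives $\Pr(\varphi = \true \mid E_1) \leq e_1$. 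Similarly, the safety-relevance of the assumption up to $e_2$ (Definition~\ref{def:safety-relevant} applied to $s_2 = \hat{s}$, $s_1 = \hat{s}_A$, combined with the fact that any trace in $E_2$ lies in $\boldsymbol{X}(\hat{s})$) gives $\Pr(\varphi = \true \mid E_2) \leq e_2$.

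Combining these with $w_1, w_2 \in [0,1]$ and $w_1 + w_2 = 1$ yields a convex combination bounded by $\max(e_1, e_2) \leq e_1 + e_2$, which is the desired inequality. Degenerate cases where $w_1$ or $w_2$ equals zero (so that the corresponding conditional probability is undefined) are handled by simply dropping that term from the sum, preserving the bound.

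The main obstacle is less analytical than conceptual: one must be careful that the two safety-relevance hypotheses are applied with the correct conditioning, in particular that the bound $e_2$ is available on $E_2$ because that event explicitly includes $\boldsymbol{x} \in \boldsymbol{X}(\hat{s})$, while the bound $e_1$ is available on $E_1$ because $\boldsymbol{x}$ is drawn from $\boldsymbol{X}(s^*)$ throughout. Once the partition is in place, the argument is a one-line convexity observation, and sufficiency of $A$ is not needed for this particular bound (it is only invoked elsewhere to justify invoking the assumption at verification time).
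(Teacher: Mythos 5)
Your proof is correct and follows essentially the same route as the paper's: both partition the event $\statevec \not\in \statevecs(\sysmod_\assn)$ into the part outside $\statevecs(\sysmod)$ (bounded by $e_1$ via the model's safety-relevance for $\sysreal$) and the part inside $\statevecs(\sysmod)$ but outside $\statevecs(\sysmod_\assn)$ (bounded by $e_2$ via the assumption's safety-relevance). The only difference is that you retain the partition weights and note they sum to one, which in fact yields the slightly tighter bound $\max(e_1, e_2)$, whereas the paper bounds each weight by $1$ separately and settles for $e_1 + e_2$; your observation that sufficiency of $\assn$ is not used in this particular bound also matches the paper's proof.
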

\begin{proof}
See Appendix~\ref{app:safebound}.
\end{proof}

We also expect that the sufficient and safety-relevant assumption can be monitored with the available monitors. Specifically, we expect that we can find a \emph{monitorable decomposition} of \assn into sub-assumptions $\assn_1 \dots \assn_n$ such that (i) a propositional formula \propform ensures the original assumption: $ \propform(\assn_1 \dots \assn_n) \implies \assn $, and (ii) for each sub-assumption $\assn_i$, there exists a monitor $\mon_i$ with a (preferably tight) bound on %
\ecerr and \conserr. 

The monitorable decomposition is a key idea behind this paper: while it is difficult to build a monolithic monitor for the exact and complete verification assumptions, it is possible to isolate monitorable sub-assumptions. We choose to decompose the assumption using propositional logic because the logical operators directly correspond to the set operations on the states and possible models. Notice that $\propform(\assn_1 \dots \assn_n)$ is required to imply \assn~--- not the other way around --- to ensure conservative monitoring:
$$ \prob\big(\propform(\assn_1 \dots \assn_n)\big) \leq \prob(\assn) $$ 

In practice, the choice of how to decompose assumptions depends on the available information in the  observations, the available monitoring techniques, and the scalability of monitors at run time. Often, the logical structure arises from the hazard analysis of the system. For example, if at least one of the redundant sensors functions correctly, the system can guarantee performance. This corresponds to a disjunction of the assumptions.

In our case studies, closed-loop reachability verification of hybrid systems with NN controllers relies on three categories of assumptions, suggested by \Cref{def:sys}. First, verification assumes that a system starts in \emph{initial states} from where it can avoid safety violations. Second, verification assumes that the reality is approximately described by the \emph{dynamics equations}, \edit{which are used to} propagate the reachable sets over time. Third, verification assumes that control inputs are related to the true state by a constrained set of \emph{observation models} that capture known sensor uncertainties. %
An assumption can span more than one category, and the particular decomposition depends on the specifics of the model and available monitors. %

\subsection{Confidence Monitors}\label{sec:confmon}

The goal is, for each sub-assumption $\assn_i$, to obtain a confidence monitor $\mon_i$ that estimates $\prob(\assn_i \mid \obsvec)$ with bounded errors \ecerr and \conserr. In our case studies, we built a state estimation-based monitor to determine whether the state at $t=0$ (or the current $t$) was part of the verification assumptions. Another monitor determines if the latest observations were consistent (up to some error bound) with the dynamical model under bounded observation noise. Our monitors were based on the existing detection and estimation techniques~\cite{dellaert99,carpenter_modelguard_2021}: Monte Carlo estimation using the dynamical model, particle filtering based on the dynamical model, and statistical model invalidation.

The produced monitors are often miscalibrated.
There is a trade-off between reducing \ecerr and \conserr of a monitor: a monitor with small \ecerr may sometimes overestimate the probability of an assumption holding, leading to a sizeable \conserr; on the other hand, a conservative monitor may significantly underestimate the probability and, hence, have large \ecerr.

Since we treat monitors as black boxes, we reduce their calibration errors with post-hoc calibration, which requires a validation dataset. Ideally, each monitor can use its own dataset without samples from other monitors, enabling independent development and tuning of the monitors.

We calibrate each monitor $\mon_i$ with Platt scaling~\cite{platt_probabilistic_1999}, a popular calibration technique, to produce a calibrated monitor $\mon_i'$. %
This technique is based on a linear transformation of the monitor's log-odds ($\operatorname{LO}$). For every monitor output $m$, we compute the calibrated value $m'$ as follows:
\begin{align*}
    m' = \frac{1}{1+\exp(c\operatorname{LO}(m) +d)}, && \operatorname{LO}(m) = \log(\frac{m}{1-m}),
\end{align*}
where $c$ and $d$ are calibration parameters to be determined.

To negotiate the tradeoff between \ecerr and \conserr, we fit calibration parameters $c$ and $d$ using weighted cross-entropy loss. The weight $\lambda \in [0,1]$ sets the relative importance of \conserr over \ecerr by penalizing overconfidence ($\lambda = 0.5$ in standard Platt scaling). The fitting is done on a validation dataset containing pairs of monitoring outputs $m_j$ and indicators $a_j$ of the assumption holding at the time: $\{ (m_j, a_j) \}$, and we fit over the calibrated scores $m'_j$:%
\begin{align}
     \argmin_{c,d} - \sum_j (1-\lambda) a_j \log(m'_j) + \lambda (1-a_j)\log(1-m'_j)
    \label{eq:weighted-ce}
\end{align}

\subsection{Composition of Confidence Monitors}\label{sec:compconf}

Our goal here is to build a compositional monitor \compmon for \compassn given $\mon_1 \dots \mon_n$ for $\assn_1 \dots \assn_n$ \edit{and provide bounds on its calibration error.}%

\begin{problem*}[\edit{Composition of Calibration Errors}] \leavevmode \\
\edit{Given \compassn and $\mon_1 \dots \mon_n$ calibrated to $\assn_1 \dots \assn_n$ with known bounds, find function \compfun with bounds on $\ecerr(\compmon, \compassn)$ and $\conserr(\compmon, \compassn)$. }
\end{problem*}

To solve this problem in full generality, one would need to know the joint distribution of random variables in what we call the \emph{monitoring probability space} \edit{induced by \dist}:
\begin{itemize}
    \item Bernoulli variable \safe indicating the ``safety event''  $\prop(\statevec) = \true$. We use \safe as an equivalent shorthand. %
    \item Bernoulli variables $\assn_1 \dots \assn_n,$ and $\assn_\propform$ corresponding to the satisfaction of assumptions and formula $\propform(\assn_1 \dots \assn_n)$. 
    \item Continuous variables $\mon_1 \dots \mon_n$ and \compmon. corresponding to the outputs of the monitors and their composition. 
\end{itemize}

\looseness=-1
We will not assume \edit{knowing} that joint distribution, nor shall we try to estimate it. Instead, \edit{this paper takes an early step to solving that problem by constraining} it with judicious simplifications: 
\begin{itemize}
    \item We know \edit{the} propositional formula \propform. 
    \item Assumption monitors have bounded \mcerr and \conserr. 
    \item Assumptions are conditionally independent given composite confidence, e.g., $\assn_i \indep \assn_j \mid \compmon$. 
    \item  Given their monitor, assumptions are independent of composition: $\assn_i \indep \compmon \mid \mon_i$.
    \item  Conditioning monitor variances on composition does not increase them: $\var(\mon_i \mid \compmon) \leq \var(\mon_i)$.
\end{itemize}

\looseness=-1
The first two simplifications mirror the process of independent, modular development of monitors. The conditional independence of assumptions has been approximately true in our case studies and is conservative unless the assumptions share a cause of violations. Future work can investigate other assumption dependencies. (Without any information about assumption dependencies, formula \propform is of limited use.) The fourth simplification indicates that a monitor provides all the relevant information about its assumption, and composition has none to add. The last simplification, informally, states the composite confidence is ``informative'' to monitors: knowing its value limits the likely values of individual monitors. Although this statement is difficult to prove, it has always held in our experiments. 

\looseness=-1
The above enables the rest of the section to proceed in three steps: 
\begin{enumerate}
\item Identify plausible families of functions \compfun (\Cref{sec:compfun}) %
    \item Provide bounds on $\ecerr(\compmon, \compassn)$ and $\conserr(\compmon, \compassn)$ given calibration bounds of individual monitor (\Cref{sec:compfun-bounds})
    \item Provide bounds on $\ecerr(\compmon, \safe)$ and $\conserr(\compmon, \safe)$ given bounds on  $\ecerr(\compmon, \compassn)$ and $\conserr(\compmon, \compassn)$ respectively (\Cref{sec:endtoend}) %
\end{enumerate}

In the first two steps, we focus on the monitoring probability sub-space without \safe: 
$$\prob(\assn_1 \dots \assn_n, \assn_\propform, \mon_1 \dots \mon_n, \compmon)$$

Given the complexity of this sub-space, this paper analyzes bounds on $\ecerr(\compmon, \compassn)$ and $\conserr(\compmon, \compassn)$ only for the scenario with two assumptions (which proves sufficient in the case studies):
$$\prob(\assn_1, \assn_2, \assn_\propform, \mon_1, \mon_2, \compmon)$$

In the third step, we focus on the sub-space without the individual assumptions and monitors:
$$\prob(\safe, \assn_\propform, \compmon)$$

\subsubsection{Composition Functions}\label{sec:compfun}

We identify composition functions in two steps: equivalently simplifying the expression $\prob(\compassn)$ and proposing plausible functions for conjunctions of assumptions.  

The structure of \compfun comes from the structure \propform. To determine \compfun for any $\propform(\mon_1 \dots \mon_n)$, we simplify the expression $\prob(\propform(\mon_1 \dots \mon_n))$ by converting $\propform(\mon_1 \dots \mon_n)$ to DNF and advancing\footnote{Using standard identities such as the inclusion-exclusion principle, e.g., $\prob(\assn_1 \lor \assn_2) = \prob(\assn_1) + \prob(\assn_2) - \prob(\assn_1 \land \assn_2)$ and $\prob(\neg \assn_1) = 1-\prob(A)$.} the probability operator until all probability operators are either over individual assumptions or conjunctions of assumptions. We replace marginal probabilities $\prob(\assn_i)$  with %
$\mon_i$, so it is sufficient to determine \compfun for conjunctions of assumptions --- and the rest of the expression is determined by the simplified version of $\prob(\compassn)$.

\edit{In the rest of this section, we will take initial steps towards addressing a key compositional sub-problem --- providing functions \compfun and their calibration bounds for $\prob(\assn_1 \land \assn_2)$.}

\begin{problem*}[\edit{Binary Conjunctive Composition of \ecerr/\conserr}] %
\edit{Given $\mon_1$ and $\mon_2$ calibrated to $\assn_1$ and $\assn_2$ respectively, find function \compfun with bounds on $\ecerr(\compmon, \assn_1 \land \assn_2)$ and $\conserr(\compmon, \assn_1 \land \assn_2).$ }
\end{problem*}

\edit{First}, we obtain \edit{candidate functions} \compfun via plausible restrictions of the probability space. Since these functions are difficult to compare theoretically, we leave the task of finding the best composition function for an arbitrary conjunction of assumptions for future work. We do, however, compare them experimentally in \Cref{sec:eval}.

\textbf{Product:} If one presumes that the assumptions are mutually independent and monitors are perfectly calibrated, then: %
\begin{align*}
    \prob(\assn_1 \land \dots \land \assn_n) = \prod_{i=1..n}
    \prob(\assn_i) =
   \prod_{i=1..n}
    \mon_i
\end{align*}

Thus, here $\compfun(\mon_1 \dots \mon_n) = \prod_{i=1..n}  \mon_i$.

\looseness=-1
\textbf{Weighted averaging:} If we presume that the monitors independently predict the combination of assumptions, then inverse-variance weighting minimizes the variance of the estimate~\cite{hedges_statistical_1985}:
\begin{align*}
 \prob(\assn_1 \land \dots \land \assn_n) = \sum_{i =1..n} w_i \mon_i, && w_i = \frac{1/Var(\mon_i)}{\sum_{j = 1..n} 1/Var(\mon_j) }
\end{align*}

Thus, here $\compfun(\mon_1 \dots \mon_n) = \sum_{i =1..n} w_i \mon_i, \sum_{i = 1..n} w_i = 1$.

\looseness=-1
The above two functions are modular and driven by theoretical considerations: they do not require information about the joint distribution of monitors. We also propose
two data-driven functions that are not modular: they require samples from the joint distribution of the monitors and assumptions. Due to their sample-dependent performance, we do not derive the error bounds for them. 

\textbf{Logistic regression:} 
This standard method treats the monitor values as arbitrary features (not probabilities) that linearly predict the log-odds of $\prob(\assn_1 \dots \assn_n)$.

$$ \compfun(\mon_1 \dots \mon_n) = \frac{1}{1+e^{- w_0 - \sum_{i=1..n} w_i\mon_i}} $$ 

Parameters $w_0 \dots w_n$ are fit on data from $\mon_1\dots\mon_n$ and \compassn using $\lambda$-weighted cross-entropy loss from \Cref{eq:weighted-ce}. 

\textbf{Sequential Bayes:} if we know the joint density $p(\mon_1 \dots \mon_n)$ and its conditioning on $\assn_1 \land \dots \land \assn_n$, our sequential Bayesian estimator starts with a uniform prior $\compfun_0$ at $t=0$, and at time $t+1$ is updated as follows:
{%
$$ \compfun_{t+1}(\mon_1\dots \mon_n ) = %
\edit{\compfun_{t} \cdot} \frac{ \prob\big(\mon_1 \dots \mon_n \mid \assn_1 \land \dots \land \assn_n\big)~ }{\prob(\mon_1 \dots \mon_n )} $$
}

\subsubsection{Error bounds for composition functions}\label{sec:compfun-bounds}

First, we note a useful lemma that links assumption probabilities to monitor expectations. Then we bound \ecerr for the product and weighted averaging composition, and finally bound \conserr for the product composition. %

\begin{lemma}[Conditional Assumption Bounds]%
\leavevmode \\
\label{lem:prob-assump}
For any composition function \compfun, if:
\begin{align*}
    &\mcerr(\mon_1, \assn_1) \leq e_1, \quad \mcerr(\mon_2, \assn_2) \leq e_2,\\ 
    &%
    \assn_1 \indep \compmon \mid \mon_1,\quad  \assn_2 \indep \compmon \mid \mon_2 %
\end{align*}
Then:
\begin{align*}
    \e[\mon_1 \mid \compmon] -e_1 &\leq \prob(\assn_1 \mid \compmon)   \leq   \e[\mon_1 \mid \compmon]+e_1 %
    \\
    \e[\mon_2 \mid \compmon] -e_2 &\leq \prob(\assn_2 \mid \compmon)   \leq   \e[\mon_2 \mid \compmon]+e_2 %
\end{align*}

\end{lemma}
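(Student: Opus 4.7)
The plan is to prove both inequalities symmetrically by essentially the same argument, so I would only write out the bound for $\assn_1$ in detail. The strategy is to express $\prob(\assn_1 \mid \compmon)$ as an integral that exposes the monitor $\mon_1$, then bound it pointwise using \mcerr, and finally integrate to get the expected value of $\mon_1$ conditioned on \compmon.

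First, I would apply the law of total probability by marginalizing over the continuous variable $\mon_1$:
\begin{align*}
\prob(\assn_1 \mid \compmon) = \int \prob(\assn_1 \mid \mon_1, \compmon)\, \den(\mon_1 \mid \compmon)\, d\mon_1.
\end{align*}
Next, I would invoke the conditional independence assumption $\assn_1 \indep \compmon \mid \mon_1$ to drop \compmon from the first factor, giving
\begin{align*}
\prob(\assn_1 \mid \compmon) = \int \prob(\assn_1 \mid \mon_1)\, \den(\mon_1 \mid \compmon)\, d\mon_1.
\end{align*}

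Now I would apply the \mcerr hypothesis pointwise. Since $\mcerr(\mon_1, \assn_1) \leq e_1$ means that for every value $p$ in the support of $\mon_1$ we have $|\prob(\assn_1 \mid \mon_1 = p) - p| \leq e_1$, substituting this two-sided bound inside the integrand yields
\begin{align*}
\int (\mon_1 - e_1)\, \den(\mon_1 \mid \compmon)\, d\mon_1 \;\leq\; \prob(\assn_1 \mid \compmon) \;\leq\; \int (\mon_1 + e_1)\, \den(\mon_1 \mid \compmon)\, d\mon_1.
\end{align*}
By linearity, both sides collapse to $\e[\mon_1 \mid \compmon] \pm e_1$, since $\den(\mon_1 \mid \compmon)$ integrates to one. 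The bound for $\assn_2$ follows by the identical argument using $\mcerr(\mon_2, \assn_2) \leq e_2$ and $\assn_2 \indep \compmon \mid \mon_2$.

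I do not anticipate a serious obstacle here: the proof is essentially a direct combination of the tower property, the conditional independence assumption, and the pointwise \mcerr bound. The only subtlety worth checking is that the manipulations are valid given that $\mon_1$ is treated as a continuous random variable with density $\den(\mon_1)$, as stated in Section~\ref{sec:backmon}; since the integrand is bounded and the density is proper, Fubini-style interchanges and the use of linearity of expectation go through without complication. Nothing here needs the other simplifying assumptions (on $\var$ or on cross-assumption conditional independence), so the lemma is a clean, low-level building block for the subsequent composition-function bounds.
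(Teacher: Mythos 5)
Your proof is correct and follows essentially the same route as the paper's: both apply the pointwise bound $|\prob(\assn_1 \mid \mon_1 = p) - p| \leq e_1$ implied by \mcerr, integrate against the conditional density $\den_1(x \mid \compmon)$, and use $\assn_1 \indep \compmon \mid \mon_1$ to identify the resulting integral with $\prob(\assn_1 \mid \compmon)$. Your write-up is actually slightly more explicit than the paper's about where the law of total probability and the conditional-independence step enter.
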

\begin{proof}
See Appendix~\ref{app:prob-assump}.
\end{proof}
\begin{theorem}[\edit{\ecerr bound} for product composition] %
\label{th:ece-product}
If:
\begin{align*}
    &\compmon = \mon_1 \mon_2,~\mcerr(\mon_1, \assn_1) \leq e_1,~ \mcerr(\mon_2, \assn_2) \leq e_2,\\ 
    &\assn_1 \indep \assn_2 \mid \compmon,~ \assn_1 \indep \compmon \mid \mon_1,~ \assn_2 \indep \compmon \mid \mon_2,  \\
    &\var(\mon_1 \mid \compmon) \leq \var(\mon_1),~ \var(\mon_2 \mid \compmon) \leq \var(\mon_2)
\end{align*}
Then:
\begin{align*}
&\ecerr(\mon_1\mon_2, \assn_1 \land \assn_2) \\
&\quad \leq \max [4e_1e_2, \sqrt{\var[\mon_1]\var[\mon_2]} + e_1 + e_2 + e_1e_2] \\
\end{align*}
\end{theorem}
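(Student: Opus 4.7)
The plan is to bound $|\prob(\compassn \mid \compmon) - \compmon|$ pointwise and then integrate. The conditional independence $\assn_1 \indep \assn_2 \mid \compmon$ factorizes $\prob(\compassn \mid \compmon) = \prob(\assn_1 \mid \compmon)\prob(\assn_2 \mid \compmon)$, and \Cref{lem:prob-assump} (whose hypotheses are exactly in place) writes each factor as $\prob(\assn_i \mid \compmon) = \e[\mon_i \mid \compmon] + \delta_i$ with $|\delta_i| \leq e_i$. On the other side of the subtraction, since $\compmon = \mon_1\mon_2$ is trivially a measurable function of itself, the identity $\compmon = \e[\mon_1\mon_2 \mid \compmon]$ combined with the definition of conditional covariance yields
$$\compmon = \e[\mon_1 \mid \compmon]\,\e[\mon_2 \mid \compmon] + \cov(\mon_1,\mon_2 \mid \compmon).$$

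Expanding $(\e[\mon_1 \mid \compmon]+\delta_1)(\e[\mon_2 \mid \compmon]+\delta_2)$ and subtracting the above expression for $\compmon$ cancels the $\e[\mon_1 \mid \compmon]\e[\mon_2 \mid \compmon]$ cross-term, leaving
$$\prob(\compassn \mid \compmon) - \compmon = \e[\mon_1 \mid \compmon]\,\delta_2 + \e[\mon_2 \mid \compmon]\,\delta_1 + \delta_1\delta_2 - \cov(\mon_1,\mon_2 \mid \compmon).$$
The triangle inequality together with $|\delta_i| \leq e_i$ then bounds the first three summands by $\e[\mon_1 \mid \compmon]\,e_2 + \e[\mon_2 \mid \compmon]\,e_1 + e_1 e_2$, while Cauchy--Schwarz on the conditional covariance plus the pointwise hypothesis $\var(\mon_i \mid \compmon) \leq \var(\mon_i)$ caps the last summand by $\sqrt{\var[\mon_1]\var[\mon_2]}$.

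Taking the outer expectation over $\compmon$ and invoking the tower property $\e[\e[\mon_i \mid \compmon]] = \e[\mon_i] \leq 1$ produces the second argument of the maximum, namely $e_1 + e_2 + e_1 e_2 + \sqrt{\var[\mon_1]\var[\mon_2]}$. The alternative $4e_1e_2$ component handles the degenerate regime where $\e[\mon_i \mid \compmon] \leq e_i$ for both $i$: there \Cref{lem:prob-assump} gives $\prob(\compassn \mid \compmon) \leq (\e[\mon_1 \mid \compmon]+e_1)(\e[\mon_2 \mid \compmon]+e_2) \leq (2e_1)(2e_2) = 4e_1e_2$, and a corresponding argument traps $\compmon$ within the same range, so the absolute difference is bounded by $4e_1e_2$ without invoking the covariance step. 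Combining the two regimes under a maximum yields the stated bound.

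I expect the main obstacle to be the covariance manipulation: the step $\compmon = \mon_1\mon_2$ is what lets me introduce $\cov(\mon_1,\mon_2 \mid \compmon)$ into the subtraction in a form that cleanly cancels the problematic cross-term, and the pointwise hypothesis $\var(\mon_i \mid \compmon) \leq \var(\mon_i)$ is strictly stronger than what the law of total variance alone would give on average. Without that hypothesis, the bound would degrade to the unwieldy $\e\bigl[\sqrt{\var(\mon_1 \mid \compmon)\var(\mon_2 \mid \compmon)}\bigr]$, so keeping the conditioning tidy throughout is the crux of the argument.
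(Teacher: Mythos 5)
Your proof is correct and rests on the same pillars as the paper's: the conditional-independence factorization of $\prob(\assn_1 \land \assn_2 \mid \compmon)$, \Cref{lem:prob-assump}, the identity $\compmon = \e[\mon_1\mid\compmon]\,\e[\mon_2\mid\compmon] + \cov(\mon_1,\mon_2\mid\compmon)$ that cancels the cross-term, and Cauchy--Schwarz combined with the variance hypothesis. The structural difference is that you expand with the exact signed residuals $\delta_i = \prob(\assn_i\mid\compmon) - \e[\mon_i\mid\compmon]$ instead of multiplying the one-sided inequalities of \Cref{lem:prob-assump}. Multiplying inequalities requires all sides to be non-negative, which is why the paper splits into three cases on whether $\e[\mon_i\mid\compmon] \geq e_i$; the $4e_1e_2$ branch of the max is what survives from the case where both conditional expectations are small. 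Your additive expansion needs no sign condition, so the single bound $e_1 + e_2 + e_1e_2 + \sqrt{\var[\mon_1]\var[\mon_2]}$ holds in every regime and the stated max follows a fortiori --- your separate treatment of the degenerate regime is therefore not needed for correctness (and, as written, it is slightly imprecise: in that regime one only gets $|\prob(\compassn\mid\compmon) - \compmon| \leq \max[\min[e_1,e_2],\, 4e_1e_2]$, since $\compmon$ is trapped below $\min[e_1,e_2]$ rather than below $4e_1e_2$; the paper handles this by absorbing the extra $e_1, e_2$ terms into the other branch of the max). Net effect: your route is shorter, eliminates the case analysis, and shows the theorem's bound can in fact be tightened to its second argument alone.
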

\begin{proof}
See Appendix~\ref{app:ece-product}.
\end{proof}
The above bound is fairly narrow if the monitors are well-calibrated and have low variance. This suggests that product composition may perform well in practice.

\begin{theorem}[\edit{\ecerr bound} for weighted averaging comp.]
\label{th:ece-weighted}
If:
\begin{align*}
    &\compmon = w_1\mon_1 +w_2 \mon_2,\quad w_1+w_2 = 1,\\ 
    &\mcerr(\mon_1, \assn_1) \leq e_1,\quad \mcerr(\mon_2, \assn_2) \leq e_2,\\ &\assn_1 \indep \assn_2 \mid \compmon,\quad \assn_1 \indep \compmon \mid \mon_1,\quad \assn_2 \indep \compmon \mid \mon_2
\end{align*}
Then:
\begin{align*}
&\ecerr(w_1\mon_1 + w_2\mon_2, \assn_1 \land \assn_2) \\ 
&\quad \leq \max[e_1 + e_2 + e_1e_2,~ \max[w_1,w_2] + e_1 +e_2-e_1e_2 ]
\end{align*}
\end{theorem}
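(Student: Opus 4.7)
The plan is to unfold the \ecerr definition and reduce the theorem to a two-variable algebraic bound on $[0,1]^2$. I would begin from
\[
\ecerr(\compmon, \assn_1 \land \assn_2) = \e\bigl[\,|\prob(\assn_1 \land \assn_2 \mid \compmon) - \compmon|\,\bigr],
\]
and apply the conditional independence $\assn_1 \indep \assn_2 \mid \compmon$ to factor the joint posterior as $\prob(\assn_1 \mid \compmon)\cdot\prob(\assn_2 \mid \compmon)$. The remaining hypotheses --- the two $\assn_i \indep \compmon \mid \mon_i$ independences together with the \mcerr bounds --- are precisely those of Lemma~\ref{lem:prob-assump}, which I would invoke to get $\prob(\assn_i \mid \compmon) \in [\e[\mon_i \mid \compmon] - e_i,\, \e[\mon_i \mid \compmon] + e_i]$. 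Note that, in contrast to the product-composition theorem, no variance assumptions are needed, because linearity of the composition will interact cleanly with the tower property.

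Letting $a := \e[\mon_1 \mid \compmon]$ and $b := \e[\mon_2 \mid \compmon]$ (both in $[0,1]$), the tower property applied to the linear form $\compmon = w_1 \mon_1 + w_2 \mon_2$ yields $\compmon = \e[\compmon \mid \compmon] = w_1 a + w_2 b$. The error inside the absolute value is then sandwiched between
\[
(a-e_1)(b-e_2) - (w_1 a + w_2 b) \;\;\text{and}\;\; (a+e_1)(b+e_2) - (w_1 a + w_2 b).
\]
Expanding both shows that each side has the form $-g(a,b) \pm (a e_2 + b e_1) + e_1 e_2$, where $g(a,b) := w_1 a + w_2 b - ab$ is the key algebraic quantity to control.

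The main obstacle, and most delicate part, is the two-sided bound $0 \leq g(a,b) \leq \max(w_1, w_2)$ over $[0,1]^2$. The lower bound is immediate since $ab \leq \min(a,b) \leq w_1 a + w_2 b$ given $w_1 + w_2 = 1$ and $a, b \leq 1$. For the upper bound, I plan a brief boundary analysis: the single interior critical point $(a,b) = (w_2, w_1)$ is a saddle (Hessian off-diagonal only) with value $g = w_1 w_2$, and scanning the four edges shows that the maximum is attained at corners $(1,0)$ and $(0,1)$, giving $w_1$ and $w_2$ respectively.

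Combining, the upper sandwich side is at most $e_1 + e_2 + e_1 e_2$ (using $a, b \leq 1$ and $g \geq 0$), while the lower sandwich side is at least $-\max(w_1, w_2) - e_1 - e_2 + e_1 e_2$. Taking absolute values and the maximum yields the pointwise (in $\compmon$) bound
\[
|\prob(\assn_1 \land \assn_2 \mid \compmon) - \compmon| \leq \max\bigl[\,e_1 + e_2 + e_1 e_2,\; \max(w_1,w_2) + e_1 + e_2 - e_1 e_2\,\bigr],
\]
which survives the outer expectation and gives the claimed bound on $\ecerr(\compmon, \assn_1 \land \assn_2)$.
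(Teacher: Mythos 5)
Your route is essentially the paper's: factor the posterior via $\assn_1 \indep \assn_2 \mid \compmon$, invoke Lemma~\ref{lem:prob-assump}, use the tower property to write $\compmon = w_1 a + w_2 b$ with $a = \e[\mon_1\mid\compmon]$, $b = \e[\mon_2\mid\compmon]$, and reduce everything to the algebraic fact $w_1 a + w_2 b - ab \in [0,\max(w_1,w_2)]$ (which you justify more explicitly than the paper does). You are also right that no variance/covariance hypotheses are needed here, precisely because the linear composition makes $\e[\compmon\mid\compmon]$ split exactly, with no covariance remainder.

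There is, however, one genuine gap: your lower sandwich bound $(a-e_1)(b-e_2) - (w_1a+w_2b)$ is obtained by multiplying the two one-sided inequalities $\prob(\assn_i\mid\compmon)\ge \,\cdot\, - e_i$, and that multiplication is only valid when both lower bounds are nonnegative. When $a < e_1$ \emph{and} $b < e_2$, the quantity $(a-e_1)(b-e_2)$ is strictly positive while the product of posteriors can be $0$, so it is not a lower bound at all. (When exactly one of $a-e_1$, $b-e_2$ is negative the product is $\le 0$ and the sandwich survives.) This is exactly why the paper's proof splits into three cases on the signs of $a-e_1$ and $b-e_2$ and treats the ``both negative'' event separately. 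The fix is short: in that regime replace the lower bounds by $\max(0,a-e_1)$ and $\max(0,b-e_2)$, so the product of posteriors lies in $[0,(a+e_1)(b+e_2)]\subseteq[0,4e_1e_2]$ while $\compmon = w_1a+w_2b < w_1e_1+w_2e_2 \le e_1+e_2$, and check that the resulting deviation is still dominated by $e_1+e_2+e_1e_2$ (trivially so whenever that bound is below $1$). With that case added, your argument is complete and matches the paper's.
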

\begin{proof}
See Appendix~\ref{app:ece-weighted}.
\end{proof}

Notice that the above \ecerr bound for the averaging composition is lower-bounded by 0.5. This reflects that under our precondition, the proof is required to bound  $w_1 \mon_1 + w_2 \mon_2 - \mon_1 \mon_2$, which can reach values at least as great as 0.5. This reflects the mismatch between additive composition and conditionally independent assumptions, and we do not expect it to perform well in our case studies. Finding  different conditions under which this \ecerr bound can be tightened remains for future work.  

\begin{theorem}[Conservatism bound for product comp.]\label{th:cce-product}
If:
\begin{align*}
    &\compmon = \mon_1\mon_2, \quad
    \assn_1 \indep \assn_2 \mid \compmon,\quad \assn_1 \indep \compmon \mid \mon_1,\quad \assn_2 \indep \compmon \mid \mon_2, \\
    &\mcerr(\mon_1, \assn_1) \leq e_1,\quad \mcerr(\mon_2, \assn_2) \leq e_2,
\end{align*}
Then:
\begin{align*}
\prob(\assn_1, \assn_2 \mid \compmon=x) \geq \max[0, x - e_1] \max[0, x - e_2]
\end{align*}

\end{theorem}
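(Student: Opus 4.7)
My plan is to reduce the joint probability via conditional independence and then bound each factor using Lemma~\ref{lem:prob-assump} together with a simple deterministic inequality that follows from the product structure of $\compmon$.

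First, I would use the assumption $\assn_1 \indep \assn_2 \mid \compmon$ to factor
\begin{align*}
\prob(\assn_1, \assn_2 \mid \compmon = x) = \prob(\assn_1 \mid \compmon = x)\,\prob(\assn_2 \mid \compmon = x),
\end{align*}
so it suffices to establish $\prob(\assn_i \mid \compmon = x) \geq \max[0, x - e_i]$ for $i=1,2$, since probabilities are non-negative and the two factors can then be multiplied.

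Next I would invoke Lemma~\ref{lem:prob-assump}, whose preconditions (the \mcerr bounds and the conditional independences $\assn_i \indep \compmon \mid \mon_i$) are all in the hypothesis, to obtain
\begin{align*}
\prob(\assn_i \mid \compmon = x) \geq \e[\mon_i \mid \compmon = x] - e_i.
\end{align*}
The remaining step is to show $\e[\mon_i \mid \compmon = x] \geq x$. This follows from the fact that monitor outputs lie in $[0,1]$: since $\compmon = \mon_1 \mon_2$ and $\mon_j \leq 1$ for $j \neq i$, conditioning on $\compmon = x$ forces $\mon_i \geq x$ almost surely, hence $\e[\mon_i \mid \compmon = x] \geq x$. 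Combined with the lemma, this yields $\prob(\assn_i \mid \compmon = x) \geq x - e_i$, and non-negativity tightens this to $\max[0, x - e_i]$.

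The main obstacle is really just the subtle point in the last step: justifying the deterministic inequality $\mon_i \geq \compmon$ a.s. when $\compmon = x$. It hinges on $\mon_j \in [0,1]$, which is part of the monitor's definition in Section~\ref{sec:backmon}; once stated, the rest of the argument is a short composition of the lemma with the factorization. Unlike the \ecerr bounds for the product composition, I do not need the variance conditions here, because conservatism requires only a one-sided bound that survives passing through the conditional expectation without needing to control its tightness.
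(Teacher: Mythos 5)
Your proposal is correct and follows essentially the same route as the paper's proof: the deterministic observation that $\mon_i \geq x$ almost surely when $\mon_1\mon_2 = x$ (since the other monitor is at most 1), hence $\e[\mon_i \mid \compmon = x] \geq x$, followed by Lemma~\ref{lem:prob-assump}, truncation at zero, and the factorization via $\assn_1 \indep \assn_2 \mid \compmon$. Your closing remark that the variance conditions are not needed for this one-sided bound is also consistent with the paper, which omits them from the hypotheses of this theorem.
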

\begin{proof}
See Appendix~\ref{app:cce-product}.
\end{proof}

This theorem shows that low-confidence outputs of product composition should not be trusted to be conservative. It also motivates another composition function that we expect to be more conservative:

\textbf{Power Product:} 
\begin{align*}
    \prob(\assn_1 \land \dots \land \assn_n) =
   \prod_{i=1..n}
    (\mon_i)^n
\end{align*}

\begin{corollary}[\edit{ \conserr bound} for product composition]\label{cor:cce-prod}
\leavevmode\\
Under conditions of \Cref{th:cce-product}, $\conserr(\mon_1\mon_2, \assn_1 \land \assn_2) $ is bounded by 
\[ 
 \begin{cases}
    \left(e_1^2 - 2e_1 (-1 + e_2) + (1 + e_2)^2\right)/4, & \text{if } (1+e_1+e_2)/2 \in [0,1] \\
    e_1 + e_2 -e_1e_2, & \text{otherwise} 
  \end{cases}
\]

\end{corollary}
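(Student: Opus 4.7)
The plan is to combine the lower bound from Theorem~\ref{th:cce-product} with the definition of \conserr and optimize the resulting function. By definition,
\[
\conserr(\mon_1\mon_2, \assn_1 \land \assn_2) = \max_{x \in [0,1]} \bigl[ x - \prob(\assn_1, \assn_2 \mid \compmon = x) \bigr],
\]
and the preceding theorem yields $\prob(\assn_1, \assn_2 \mid \compmon = x) \geq \max[0, x-e_1]\max[0, x-e_2]$. Substituting, it suffices to upper bound
\[
g(x) := x - \max[0, x-e_1]\max[0, x-e_2]
\]
on $[0,1]$, and the proof reduces to a one-variable maximization problem.

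I would then split $[0,1]$ into two regions. On $x \leq \max(e_1, e_2)$ at least one of the clamped factors is zero, so $g(x) = x$ and its supremum over this region is $\max(e_1, e_2)$. On $x > \max(e_1, e_2)$, both clamps are inactive and $g(x) = x - (x-e_1)(x-e_2) = -x^2 + (1+e_1+e_2)x - e_1 e_2$, a downward-opening parabola with unconstrained maximizer $x^\star = (1+e_1+e_2)/2$.

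The two cases in the corollary correspond to whether $x^\star$ lies inside $[0,1]$. If $(1+e_1+e_2)/2 \in [0,1]$ (equivalently $e_1+e_2 \leq 1$), I would plug $x^\star$ into the quadratic, simplify $-{x^\star}^2 + (1+e_1+e_2)x^\star - e_1 e_2 = (1+e_1+e_2)^2/4 - e_1 e_2$, and algebraically regroup the result as $(e_1^2 - 2e_1(-1+e_2) + (1+e_2)^2)/4$. Otherwise, the parabola is increasing on $[0,1]$ and its maximum is at $x=1$, giving $g(1) = 1 - (1-e_1)(1-e_2) = e_1 + e_2 - e_1e_2$. A short check that $x^\star \geq \max(e_1,e_2)$ under the first-case hypothesis, and that both candidate values dominate $\max(e_1,e_2)$ from the first region (using $((e_1-e_2)-1)^2 \geq 0$ and $\min(e_1,e_2)(1-\max(e_1,e_2)) \geq 0$ respectively), confirms the split.

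The main obstacle is not any single computation but rather the careful bookkeeping of the case split: one must verify that the interior maximizer of the quadratic truly lies beyond $\max(e_1,e_2)$ in the first case (otherwise the linear region could be competitive), and that the boundary value at $x=1$ genuinely dominates the alternative in the second case. These are the only places where the monotone structure of $g$ is not visually obvious, so I would present them as explicit sign checks rather than relying on intuition.
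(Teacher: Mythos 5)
Your proposal is correct and follows essentially the same route as the paper's proof: substitute the lower bound from Theorem~\ref{th:cce-product}, split $[0,1]$ at $\max(e_1,e_2)$, and maximize the downward-opening quadratic, with the two cases determined by whether its vertex $(1+e_1+e_2)/2$ lies in $[0,1]$. Your explicit sign checks that the vertex exceeds $\max(e_1,e_2)$ and that both candidates dominate the linear region are slightly more careful than the paper's, but the argument is the same.
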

\begin{proof}

\begin{align*}
\conserr(\mon_1\mon_2) &= \max_{x\in[0,1]} [x -   \prob(\assn_1 \land \assn_2 \mid \compmon = x) ] \\
&\leq \max_{x\in[0,1]} [x- \max[0, x   -e_1]\max[0,x-e_2] ]
\end{align*}

On interval $x \in [0, \max[e_1,e_2]]$, the maximum is achieved on $x=\max[e_1,e_2]$ and equal to $\max[e_1,e_2]$.

On interval $x \in [\max[e_1,e_2] ,1]$, we have
$f(x) = x - (x-e_1)(x-e_2) = -x^2 + (1+e_1+e_2)x - e_1 e_2$, which is a quadratic function with $f''(x) <0$. Therefore, its maximum on $[0,1]$ may be achieved only in $x=\max[e_1,e_2]$, $x=1$, or when $f'(x) = 0$. The former case \edit{is above}, leaving the other two.  $f(1) = e_1 + e_2 -e_1e_2$, which can be shown to be greater than $\max[e_1, e_2]$, eliminating that bound.

Solving $f'(x_0) = 0$ for $x_0$, we get: 
\begin{align*}
    x_0 & = ( (1+e_1+e_2)/2; 
    f(x_0) = \left(e_1^2 - 2e_1 (-1 + e_2) + (1 + e2)^2 \right)/4
\end{align*}

Thus, if $(1+e_1+e_2)/2 \in [0,1]$, we get the above bound, and $e_1 + e_2 -e_1e_2$ otherwise.

\end{proof}

\looseness=-1
When proving \Cref{th:cce-product}, we took advantage of a property of product composition: \edit{the composed confidence is a lower bound of each monitor confidence.} This property is not present in \edit{the sum-based compositions} (averaging, logistic regression): given a composition value, one of the monitors may be arbitrarily small, and so the probability of the (conditionally independent) assumptions can be arbitrarily small as well. Hence, no general conservatism bound can be provided for such compositions under our preconditions. It may be possible provide it in special cases of assumption dependencies or monitor distributions, which remain for future work. 

\subsection{End-to-End Bound on Calibration Error} \label{sec:endtoend}

Finally, we get to tie the \edit{\coco} framework together with the results from \Cref{sec:modelreq,sec:verifassn,sec:confmon,sec:compconf} and prove a \emph{key result of this paper}: when the requirements of our framework are satisfied, compositional monitors have guaranteed upper bounds on \ecerr and \conserr. The proofs can be found in \Cref{app:ece-comp-bounds,app:cce-comp-bounds}.

\begin{theorem}[End-to-end bound on \ecerr]
\label{th:ece-comp-bounds}
If the model is safety-relevant up to $e_1$, assumptions \compassn are sufficient and  safety-relevant up to $e_2$, and monitor $\compmon$ is calibrated to \compassn with $\ecerr(\compmon,\compassn) \leq e_3$, %
then it is calibrated to \safe with bounded $\ecerr(\compmon,\safe)$:  $$\ecerr(\compmon, \safe) \leq e_1 + e_2 + e_3$$
\end{theorem}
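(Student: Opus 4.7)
The plan is to bound $\ecerr(\compmon,\safe) = \e_\compmon[|\prob(\safe \mid \compmon) - \compmon|]$ by inserting $\prob(\compassn \mid \compmon)$ as a pivot and applying the triangle inequality:
\begin{align*}
\ecerr(\compmon,\safe) &\leq \e_\compmon[|\prob(\safe \mid \compmon) - \prob(\compassn \mid \compmon)|] \\
&\quad + \e_\compmon[|\prob(\compassn \mid \compmon) - \compmon|].
\end{align*}
The second term is exactly $\ecerr(\compmon,\compassn) \leq e_3$ by hypothesis, so the whole proof reduces to bounding the first term by $e_1 + e_2$.

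Next, I would exploit that \compassn is sufficient for \prop on \sysmod: verification soundness gives $\sysmod_\compassn \models \prop$, hence every trace in $\statevecs(\sysmod_\compassn)$ is safe. In the monitoring probability space this means $\prob(\neg\safe \land \compassn) = 0$, so conditioning on \compmon preserves $\prob(\neg\safe \land \compassn \mid \compmon) = 0$ almost surely. Decomposing by the assumption event:
\begin{align*}
\prob(\safe \mid \compmon) &= \prob(\safe, \compassn \mid \compmon) + \prob(\safe, \neg\compassn \mid \compmon), \\
\prob(\compassn \mid \compmon) &= \prob(\safe, \compassn \mid \compmon) + \prob(\neg\safe, \compassn \mid \compmon) = \prob(\safe, \compassn \mid \compmon).
\end{align*}
Subtracting yields $\prob(\safe \mid \compmon) - \prob(\compassn \mid \compmon) = \prob(\safe, \neg\compassn \mid \compmon) \geq 0$, so the absolute value can be dropped.

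The final step is to push the expectation through the conditioning. By the tower property, $\e_\compmon[\prob(\safe, \neg\compassn \mid \compmon)] = \prob(\safe, \neg\compassn) = \prob(\safe \mid \neg\compassn)\,\prob(\neg\compassn)$. Theorem~\ref{thm:safebound} provides $\prob(\safe \mid \neg\compassn) \leq e_1 + e_2$ (identifying the trace-level conditioning of that theorem with $\neg\compassn$ in the monitoring probability space), and $\prob(\neg\compassn) \leq 1$, so the first term is bounded by $e_1 + e_2$. Combining with the $e_3$ bound on the second term completes the argument.

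The main obstacle I anticipate is not the arithmetic but the semantic matching between two probability spaces: Theorem~\ref{thm:safebound} is phrased in terms of a random trace from \sysreal with the trace-level event $\statevec \notin \statevecs(\sysmod_\compassn)$, while the present theorem lives in the monitoring probability space induced by \dist with a Bernoulli indicator $\compassn$ and a continuous variable \compmon. I would devote a short paragraph to making this correspondence explicit — noting that \compassn in the monitoring space is by definition the indicator of $\statevec \in \statevecs(\sysmod_\compassn)$ for the trace generated from $\dist$, so the two conditional probabilities coincide — after which the chain of inequalities above delivers $\ecerr(\compmon,\safe) \leq e_1 + e_2 + e_3$.
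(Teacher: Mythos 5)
Your proposal is correct and follows essentially the same route as the paper's proof: the triangle inequality with $\prob(\compassn \mid \compmon)$ as pivot (the paper phrases this as adding and subtracting $\ecerr(\compmon,\compassn)$, which is equivalent), the use of sufficiency to kill $\prob(\neg\safe \land \compassn \mid \compmon)$, and the tower-property reduction to $\prob(\safe \mid \neg\compassn) \leq e_1+e_2$ via Theorem~\ref{thm:safebound}. Your closing remark on reconciling the trace-level and monitoring probability spaces is a point the paper leaves implicit, but it does not change the argument.
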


\begin{theorem}[End-to-end bound on \conserr]
\label{th:cce-comp-bounds}
If \compassn is a sufficient assumption and $\conserr(\compmon, \compassn) \leq e$, then $\conserr(\compmon, \safe) \leq e$.
\end{theorem}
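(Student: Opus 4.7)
The plan is to reduce the end-to-end conservatism bound to a single pointwise comparison between conditional probabilities. Specifically, I would argue that for every $p \in [0,1]$, we have $\prob(\safe \mid \compmon = p) \geq \prob(\compassn \mid \compmon = p)$. Once this is in place, subtracting both sides from $p$ flips the inequality and taking the max over $p$ yields $\conserr(\compmon, \safe) \leq \conserr(\compmon, \compassn) \leq e$ directly from the definition of \conserr. Unlike the \ecerr bound in \Cref{th:ece-comp-bounds}, conservatism is a one-sided quantity, so only an implication in one direction --- from assumption to safety --- is needed, which is why no safety-relevance hypothesis appears in the statement.

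The key step is establishing $\compassn \Rightarrow \safe$ as events in the monitoring probability space. Sufficiency says $\verif_{\sysmod,\prop}(\compassn) = \true$, which by the over-approximation guarantee of verification implies $\sysmod_{\compassn} \models \prop$: every trace of the restricted model is safe. Since \dist is a distribution over $\states \times \dynms \times \measms$ by construction, any realization for which \compassn holds places the sampled $(\state_0, \dynm, \measm)$ inside the parameter set of $\sysmod_{\compassn}$, so the induced trace lies in $\statevecs(\sysmod_{\compassn})$ and hence satisfies $\prop(\statevec) = \true$. This containment of events then passes through the conditioning on $\compmon = p$ by monotonicity of conditional probability, delivering the pointwise inequality above.

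The main subtlety I anticipate --- and the main obstacle --- is being precise about the correspondence between the assumption viewed as a subset of $\states \times \dynms \times \measms$ and the assumption viewed as an event in the joint probability space that also carries $\statevec$, $\obsvec$, and $\compmon(\obsvec)$. I would need to verify that both $\compassn$ and $\safe$ are measurable with respect to the $\sigma$-algebra on which $\compmon$ is defined, so that the two conditional probabilities are well-defined and set-containment translates cleanly to the probability inequality. Once this measurable-space bookkeeping is handled, the remainder of the proof collapses to one line, so the write-up should remain quite short.
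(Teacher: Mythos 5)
Your proposal is correct and matches the paper's argument: the paper establishes the same pointwise inequality $\prob(\safe \mid \compmon = x) \geq \prob(\compassn \mid \compmon = x)$ by expanding via total probability and using sufficiency to set $\prob(\safe \mid \compassn, \compmon = x) = 1$, which is just another phrasing of your event-containment step $\compassn \Rightarrow \safe$. Your explicit attention to the measurability bookkeeping is a point the paper leaves implicit, but the substance of the proof is the same.
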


Notice \edit{that} the preconditions for the \conserr bound are weaker due to relying on formal verification with strong guarantees.

\begin{table*}[tbh]
{\scriptsize
    \centering
    \resizebox{\textwidth}{!}{
    \begin{tabular}{p{1.35cm} c || c c c c c |c  c c c c c c |}
  &   & \multicolumn{5}{c}{\small Mountain Car Case Study}     &    \multicolumn{5}{c}{\small UUV Case Study}    \\ \hline
         \addstackgap[4pt]{\small Monitor} & {\small Predicts} &  {\small \eecerr} & {\small \emcerr} & {\small \econserr} & {\small \ebrier} & {\small \eauc}  & {\small \eecerr} & {\small \emcerr} & {\small \econserr} & {\small \ebrier} & {\small \eauc} \\ \hhline{============}
         \multirow{2}{*}{\small$\mon_1$}          & $ \assn_1$ & $ 0.021 \pm 0.004 $ & $0.157 \pm 0.04$ & $0.152 \pm 0.04$ & $0.043 \pm 0.002$ & $0.987 \pm 0.001$ & $ 0.1 \pm 0.02$ & $0.41 \pm 0.15$ & $0.26 \pm 0.15$ & $0.176  \pm  0.01$ & $0.829 \pm 0.009$    \\
         & \safe & $0.285 \pm 0.01$ & $0.456 \pm 0.02$ & $0.456 \pm 0.02$ & $0.313 \pm 0.01$ & $0.699 \pm 0.01$ & $0.111  \pm 0.04$ & $0.399 \pm 0.14$ & $0.316 \pm 0.12$ & $ 0.202 \pm  0.02$ &$0.796 \pm 0.01$  \\ \hline
         \multirow{2}{*}{\small$\mon_2$}          & $ \assn_2$ & $0.157 \pm 0.02$ & $ 0.322 \pm 0.04$ & $0.278 \pm 0.02$ & $0.225 \pm 0.004$ & $0.764 \pm 0.002$ & $0.197  \pm 0.06$ & $0.317 \pm 0.1$ & $0.296 \pm 0.11$ & $0.224  \pm 0.04 $ & $0.676 \pm 0.02$    \\
         & \safe & $0.241 \pm 0.02$ & $0.436 \pm 0.01$ & $0.436 \pm 0.01$ & $0.307 \pm 0.004$ & $0.674 \pm 0.007$ & $0.239  \pm 0.06$ & $0.327 \pm 0.1$ & $0.317 \pm 0.1$ & $0.261  \pm 0.04 $ &$0.659 \pm 0.02$  \\ \hlineB{1.6} %
         \multirow{2}{*}{\small$\mon_1\mon_2$} & \compassn & $ 0.087 \pm 0.01 $  & $ \mathbf{0.207 \pm 0.01} $& $ 0.207 \pm 0.01$& $ \mathbf{0.132 \pm 0.003}$ & $ \mathbf{0.887 \pm 0.004}$ & $0.109 \pm 0.02$ & $0.343  \pm 0.18$ & $0.196 \pm 0.12$  & $0.184 \pm 0.008$  & $0.82 \pm 0.02$   \\
         & \safe & $\mathbf{0.129 \pm 0.007}$ & $\mathbf{0.280 \pm 0.04}$ & $0.18 \pm 0.01$ & $\mathbf{0.202 \pm 0.004}$ & $\mathbf{0.784 \pm 0.007}$ & $0.107  \pm 0.03$ & $\mathbf{0.343 \pm 0.18}$ & $0.184  \pm 0.12$ & $0.182  \pm  0.007$ &$ 0.821 \pm 0.009$  \\ \hline
         \multirow{2}{*}{\small$w_1\mon_1+w_2\mon_2$} & \compassn&  $0.349 \pm 0.02$& $0.659 \pm 0.02$ &$0.659 \pm 0.02$ & $0.266 \pm 0.01$& $0.811 \pm 0.003$ & $0.212 \pm 0.06$  & $0.428 \pm 0.1$  & $0.428  \pm 0.1$ & $0.238 \pm 0.03$  & $0.813 \pm 0.009$  \\
          & \safe& $0.223 \pm 0.01$& $0.467 \pm 0.02$ & $0.467 \pm 0.02$& $0.244 \pm 0.006$& $0.742 \pm 0.004$ & $0.188  \pm 0.06$ & $ 0.417 \pm 0.1$ &$ 0.417  \pm 0.1$  &$0.227 \pm  0.03$  & $0.807 \pm 0.009$ \\ \hline
          \multirow{2}{*}{\small$(\mon_1\mon_2)^2$} & \compassn& $0.092 \pm 0.009$ & $0.210 \pm 0.02$ & $0.204 \pm 0.02$ & $\mathbf{0.132 \pm 0.004}$ & $\mathbf{0.887 \pm 0.004}$ &  $0.218  \pm 0.07$ & $0.342 \pm 0.05$ & $\mathbf{-0.038 \pm 0.08}$ &$0.226  \pm 0.03$  & $0.82 \pm 0.009$ \\
          & \safe& $0.213 \pm 0.01$& $0.428 \pm 0.05$ & $0.175 \pm 0.01$& $0.234 \pm 0.008$& $\mathbf{0.784 \pm 0.007}$ & $0.239  \pm 0.07$ & $0.386 \pm 0.06$  & $\mathbf{-0.044  \pm 0.08}$ & $0.235 \pm 0.03$  & $0.821 \pm 0.009$  \\ \hline
          \multirow{2}{*}{\small LogReg($\mon_1, \mon_2$)} & \compassn& $\mathbf{0.049 \pm 0.006}$ & $0.245 \pm 0.04$& $\mathbf{0.108 \pm 0.01}$ &  $\mathbf{0.13 \pm 0.003}$& $0.867 \pm 0.006$ &  $\mathbf{0.07  \pm 0.03}$ &$\mathbf{0.331  \pm 0.18}$ &$0.155 \pm 0.08$  &$0.173  \pm 0.006$  &  $0.829 \pm 0.008$ \\
          & \safe& $\mathbf{0.129 \pm 0.02}$&  $0.294 \pm 0.03$& $\mathbf{-0.018 \pm 0.03}$ & $0.212 \pm 0.007$ & $0.764 \pm 0.008$ & $\mathbf{0.079 \pm 0.03}$  & $0.402 \pm  0.18$  & $0.143  \pm 0.08$ & $0.173 \pm 0.006$  &  $0.829 \pm 0.009$  \\ \hline
           \multirow{2}{*}{\small Bayes($\mon_1, \mon_2$)} & \compassn& $0.285 \pm 0.02 $&$0.679 \pm 0.07$ & $0.679 \pm 0.07$ &$0.286 \pm 0.02$ & $\mathbf{0.886 \pm 0.02}$ &  $0.141 \pm 0.02$  & $0.438  \pm 0.07$ & $0.27  \pm 0.11$ & $\mathbf{0.155  \pm 0.02}$ & $\mathbf{0.914 \pm 0.01}$ \\
          & \safe& $0.328 \pm 0.01$& $0.572 \pm 0.06$& $0.572 \pm 0.06$ & $0.331 \pm 0.01$ & $0.76 \pm 0.02$ & $0.14  \pm 0.02$ & $0.445  \pm 0.07$ & $0.22  \pm 0.09$ & $\mathbf{0.153 \pm 0.02}$   &  $\mathbf{0.917  \pm 0.02}$
    \end{tabular}}
    \caption{Average monitor performance across 20 cross-validation runs with neutrally-weighed calibration  ($\lambda=0.5$)}
    \label{tab:equal-results}
    }
\end{table*}

\begin{table*}[tbh]
{\scriptsize
    \centering
    \resizebox{\textwidth}{!}{
    \begin{tabular}{p{1.35cm} c || c c c c c |c  c c c c c c |}
  &   & \multicolumn{5}{c}{\small Mountain Car Case Study}     &    \multicolumn{5}{c}{\small UUV Case Study}    \\ \hline
         \addstackgap[4pt]{\small Monitor} & {\small Predicts} &  {\small \eecerr} & {\small \emcerr} & {\small \econserr} & {\small \ebrier} & {\small \eauc}  & {\small \eecerr} & {\small \emcerr} & {\small \econserr} & {\small \ebrier} & {\small \eauc} \\ \hhline{============}
         \multirow{2}{*}{\small$\mon_1$}          & $ \assn_1$ & $ 0.058 \pm 0.006  $ & $0.405 \pm 0.03$ & $-0.002 \pm 0.002$ & $0.057 \pm 0.004$ & $0.987 \pm 0.001$ & $0.251  \pm 0.05 $ & $0.506 \pm  0.16$ & $-0.039 \pm 0.02 $ & $0.236 \pm 0.02  $ & ${0.825 \pm 0.01} $    \\
         & \safe & $0.315  \pm  0.009$ & $0.465 \pm 0.01$ & $0.465 \pm 0.01$ & $0.329 \pm 0.009$ & $0.693 \pm 0.01$ & $0.191  \pm 0.05$ & $0.449 \pm 0.18$ & $0.002 \pm 0.06$ & $0.228 \pm 0.02 $ & $0.793 \pm 0.02$    \\ \hline
         \multirow{2}{*}{\small$\mon_2$}          & $ \assn_2$ & $0.195 \pm  0.008$ & $0.548 \pm 0.07$ & $0.241 \pm 0.009$ & $0.236 \pm 0.002$ & $0.764  \pm 0.002$ & $0.096  \pm 0.03$ & $0.363 \pm 0.28$ & $0.099 \pm 0.03$ & $0.193 \pm 0.006 $ & $0.671 \pm 0.01$    \\
         & \safe & $0.274  \pm  0.008$ & $0.475 \pm 0.05$ & $0.437 \pm 0.01$ & $0.316 \pm 0.005$ & $0.67 \pm 0.007$ & $0.125  \pm 0.03$ & $0.337 \pm 0.2$ & $0.162 \pm 0.03$ & $0.22 \pm 0.006 $ & $0.653 \pm 0.02$    \\ \hlineB{1.6} %
         \multirow{2}{*}{\small$\mon_1\mon_2$} & \compassn & $ \mathbf{0.102 \pm 0.008 }$  & $ \mathbf{0.231 \pm 0.02} $& $ 0.203 \pm 0.009$& $\mathbf{ 0.137 \pm 0.004}$ & $ \mathbf{0.881 \pm 0.004}$ & $0.287 \pm 0.05$ & $0.429  \pm 0.05$ & $-0.076 \pm 0.02$  & $0.277\pm 0.03$  & $0.817 \pm 0.01$   \\
         & \safe & $0.223 \pm 0.008$ & $0.486 \pm 0.04$ & $0.176 \pm 0.008$ & $0.242 \pm 0.006$ & $0.779 \pm 0.008$ & $0.31  \pm 0.05$ & $0.46 \pm 0.07$ & $-0.08  \pm 0.02$ & $0.277  \pm  0.03$ &$ 0.818 \pm 0.01$  \\ \hline
         \multirow{2}{*}{\small$w_1\mon_1+w_2\mon_2$} & \compassn&  $0.229 \pm 0.008$& $0.363 \pm 0.009$ &$0.363 \pm 0.009$ & $0.197 \pm 0.002$& $0.826 \pm 0.01$ & $\mathbf{0.109 \pm 0.03}$  & $\mathbf{0.272 \pm 0.11}$  & $0.072  \pm 0.09$ & $\mathbf{0.185 \pm 0.007}$  & $\mathbf{0.825 \pm 0.01}$  \\
          & \safe& $\mathbf{0.172 \pm 0.006}$& $\mathbf{0.308 \pm 0.04}$ & $0.228 \pm 0.01$& $\mathbf{0.22 \pm 0.003}$& $0.749 \pm 0.01$ & $\mathbf{0.12 \pm 0.04}$ & $ \mathbf{0.281 \pm 0.11}$ &$ 0.051  \pm 0.08$  &$\mathbf{0.186 \pm  0.009}$  & $\mathbf{0.827 \pm 0.01}$ \\ \hline
          \multirow{2}{*}{\small$(\mon_1\mon_2)^2$} & \compassn& $0.151 \pm 0.006$ & $0.431 \pm 0.04$ & $0.197 \pm 0.01$ & $0.157 \pm 0.005$ & $\mathbf{0.881 \pm 0.004}$ &  $0.443  \pm 0.05$ & $0.668 \pm 0.05$ & $\mathbf{-0.162 \pm 0.03}$ &$0.394  \pm 0.04$  & $0.817 \pm 0.01$ \\
          & \safe& $0.273 \pm 0.007$& $0.614 \pm 0.02$ & $0.169 \pm 0.009$& $0.276 \pm 0.006$& $0.779 \pm 0.008$ & $0.466 \pm 0.05$ & $0.67 \pm 0.05$  & $\mathbf{-0.164  \pm 0.03}$ & $0.415 \pm 0.04$  & $0.818 \pm 0.01$  \\ \hline
          \multirow{2}{*}{\small LogReg($\mon_1, \mon_2$)} & \compassn& $0.144 \pm 0.01$& $0.447 \pm 0.02$& $\mathbf{-0.059 \pm 0.008}$ &  $0.16 \pm 0.006$& $0.868 \pm 0.005$ &  $0.235  \pm 0.05$ &$0.472  \pm 0.13$ &$-0.044 \pm 0.04$  &$0.231  \pm 0.02$  &  $\mathbf{0.826 \pm 0.009}$ \\
          & \safe& $0.276 \pm 0.009$&  $0.481 \pm 0.02$& $\mathbf{-0.237 \pm 0.01}$ & $0.275 \pm 0.007$ & $0.761 \pm 0.008$ & $0.258 \pm 0.05$  & $0.569 \pm  0.1$  & $-0.048  \pm 0.04$ & $0.243 \pm 0.02$  &  $\mathbf{0.827 \pm 0.01}$
    \end{tabular}}
    \caption{Average monitor performance across 20 cross-validation runs with conservatively-weighed calibration  ($\lambda=0.8$)}
    \label{tab:cons-results}
    }
\end{table*}

\section{Evaluation}\label{sec:eval} 

\looseness=-1
The goal of our case studies is to evaluate the usefulness of \edit{the \coco} framework as a whole (``does assumption monitoring predict safety violations?''), the usefulness of confidence composition (``\edit{do compositions outperform} their constituents?''), and our ability to improve conservatism (``\edit{how} to reduce the \conserr of \edit{compositions}?''). 

\looseness=-1
We perform two case studies: a mountain car getting up a hill and an underwater vehicle tracking a pipeline. Each system has two verification assumptions and two monitors. The studies differ in several ways to show the flexibility of \edit{\coco}: the safety properties are somewhat different (eventually vs always), the true model is unknown to us in the second case study, initial-state assumptions are combined differently with measurement assumptions, the state assumption is evaluated at different times ($t=0$ and current $t$), and the state estimation monitors use different techniques. %

\looseness=-1
Our plan is to execute each system and collect, for each monitor, a dataset of $N$  monitor outputs, $\emon = \{m_1 \dots m_N\}$, binary satisfactions of the respective monitored assumption, $\eassn = \{a_1 \dots a_N\}$, and true eventual binary safety outcomes (the chance of which the monitor predicts indirectly), $\esafe = \{\phi_1 \dots \phi_N\}$. 

Our analysis will measure the binned approximations of the calibration errors from \Cref{sec:backmon} on a uniform binning $B_1 \dots B_K$ of $[0,1]$ into $K=10$ confidence bins. We compare the average confidence within each bin, $\operatorname{conf}(B_k) := \frac{1}{|B_k|} \sum_{i \in B_k} m_i$, with the rate of assumption occurrence in that bin, $\operatorname{occ}(B_k) := \frac{1}{|B_k|} \sum_{i \in B_k} a_i $.

\begin{itemize}
    \item \emph{Estimated expected calibration error:} (\eecerr)
    $$ \eecerr(\emon, \eassn) := \sum_{k=1}^K \frac{|B_k|}{N} |\operatorname{occ}(B_k) - \operatorname{conf}(B_k)|$$

    \item \emph{Estimated maximum calibration error:} (\emcerr)
     $$ \emcerr(\emon, \eassn) := \max_{k \in K} |\operatorname{occ}(B_k) - \operatorname{conf}(B_k)| $$
        
    \item \emph{Estimated conservatism error:} (\econserr)
    $$ \econserr(\emon, \eassn) := \max_{k \in K} [\operatorname{conf}(B_k) - \operatorname{occ}(B_k)] $$

\end{itemize}

To evaluate how calibrated \emon is to safety, \eassn is replaced with \esafe in the above definitions.

Calibration should not be evaluated in isolation from accuracy-related measures; otherwise, a monitor could ``cheat'' by always outputting an estimate of average probability ---  and thus give up its ability to discriminate the outcomes. So, in addition to calibration measures, we will provide two measures of accuracy:
\begin{itemize}
    \looseness=-1
    \item \emph{Estimated Brier Score} (\ebrier) is a classic scoring rule for probability predictions~\cite{ranjan_combining_2010,winkler_averaging_2018}, \edit{a.k.a.} the mean squared error:
    \hspace{-5mm}
    $$ \ebrier(\emon, \eassn) := \frac{1}{N} \sum_{i=1}^N (m_i - a_i)^2 $$
    \hspace{-5mm}
    \item \emph{Area under Curve} (\eauc) of the trade-off (ROC) curve between the true positive and false positive rates, which manifests when \emon is thresholded by every number between 0 to 1. It is used to measure a classifier's discrimination ability. %
\end{itemize}

\edit{The case study data and the source code for its analysis are available at \url{https://github.com/bisc/coco-case-studies}.}

\subsection{Mountain Car}

\edit{The} \emph{mountain car} (MC)~\cite{moore91} is a standard reinforcement learning benchmark where the task is to drive an underpowered car up a hill from a valley. The controller needs to first drive the car up the opposite hill so as to gather enough speed. Formally, the car has two continuous states, position and velocity, both one-dimensional in the horizontal direction, \statevec~:= (position $p$, velocity $v$).

The car is considered safe if it gets to the top of the hill in 110 steps: \prop~:= $t \geq 110 \implies p \geq 0.45$. %
The dynamics \dynms is as follows:
\begin{align*}
& p_{k+1} = p_k + v_{k}, &&
v_{k+1} = v_k + 0.0015u_k - z*cos(3p_k),
\end{align*}
\looseness=-1
where $u_k \edit{\in [-1,1]}$ is the controller's output, and $z$ is the hill steepness, \edit{sampled uniformly from} two values: $\{0.0025, 0.0035\}$. %
Initial position $p_0$ is \edit{sampled} uniformly from $[-0.6, -0.4]$, and $v_0 = 0$.  %

In our extension of the classic mountain car, noisy measurements \obsvec~:= (estimated position $\hat{p}$, estimated velocity $\hat{v}$) are obtained from measurement models \measms in which driving faster makes localization more difficult, and being on a hillside biases the velocity estimates. This model uses noise parameters $c$ and $d$ chosen uniformly from $[-1, 1]$ and $[-0.01, 0.02]$, respectively:
\begin{align*}
   & \hat{p}_k = p_k + cv_{k}, &&  \hat{v}_k = v_k + dp_{k}
\end{align*}

We use a NN controller that was trained and verified in the related work~\cite{ivanov19}. We extended its verification under two assumptions: $\assn_1$ encodes the relation between the initial states and noise parameters where the verification can guarantee as a predicate over $p_0, c$, and $d$; $\assn_2$ expects the execution to follow \dynms above with $z=0.025$ (less steep hill) and noise $c$ and $d$ from the intervals above, up to a certain error bound. The car may fail due to violating either assumption, leading to $\compassn = \assn_1 \land \assn_2$. %

Monitoring $\assn_1$ is performed by an initial Monte-Carlo sample of triples $(p_0, c, d)$ and gradually inferring their weights based on \dynms and the observations. $\mon_1$ outputs the weight fraction of the samples that satisfy the predicate from $\assn_1$. Monitoring $\assn_2$ is performed by statistically testing the consistency between \dynms, \measms, and a trace of last 6 observations using an existing tool ModelGuard~\cite{carpenter_modelguard_2021}. The confidence $\mon_2$ is either 1 when the model matches the execution or the percentage of the model parameter space that was explored and found to be inconsistent with the execution. %

In data collection, we uniformly sample initial states, noise parameters, hill steepness $z$, and add white process noise $(\mathcal{N}(0,0.001)$, $\mathcal{N}(0,0.0001))$ to introduce a slight mismatch between our model \sysmod and the ``real'' system \sysreal. We collected 2002 MC executions with $N=196449$ samples total.  

\subsection{Unmanned Underwater Vehicle}

The second case study, is an \emph{unmanned underwater vehicle} (UUV) based on \edit{a challenge problem from the DARPA Assured Autonomy program. The UUV follows an underwater pipeline and inspects it for cracks. Here, the ``real" system \sysreal is implemented with a high-fidelity UUV simulator} based on the Robot Operating System~\cite{manhaes16}.  We use a linearized identified dynamics model \dynms of the UUV.  %
The states are \statevec~ := (two-dimensional position $p_x$, $p_y$, heading $\theta$, velocity $v$, and 4 digital variables from system identification), and the pipe coincides with $x$-axis. %
The measurement \obsvec~:= (heading to \edit{the} pipe $\hat{\theta}$, range to \edit{the} pipe $\hat{r}$) contains a sensor estimate of the heading $\theta$, which is the angle formed between the UUV's direction and the positive $x$-axis. It also contains the range measurement $\hat{r}$ which is the distance to the horizontal pipeline at $y=0$, perpendicular to the heading. The measurements are computed as:
\begin{align*}
\vspace{-3mm}
    & \hat{r}_k = \frac{p_{y,k}}{cos(\theta_k)},   && \hat{\theta}_k = \theta_k + D\mathbf{\mathit{u}}_k,
    \vspace{-3mm}
\end{align*}
where the coefficient matrix $D = \begin{bmatrix}0 & 0 & \dots & d\end{bmatrix}^T$ extracts the controlled turn angle and multiplies with a noise parameter $d \in [-0.1, 1.5]$. Intuitively, $d$ approximates heading estimation delays during turns, improving our model's safety-relevance. 

\edit{We consider the UUV safe} at any time $\hat{t}$ if for the next 30 seconds its distance to pipe is within the side-looking sonar range (between 10m and 50m):  %
\prop~:= $t \leq \hat{t} + 30 \implies 10 \leq p_y \leq 50  $. %
Given these measurements, we train a NN controller running at 0.5 Hz to follow the pipe using the TD3 reinforcement learning algorithm~\cite{fujimoto18}.

\looseness=-1
Assumption $\assn_1$ is a predicate over $p_y$ and $\theta$ where the verification succeeded. \edit{We monitor it with a particle filter, which propagates particles} ($p_y, \theta)$ over time. $\mon_1$ outputs the weight fraction of the current particles within the $\assn_1$. $\assn_2$ expects the system to behave consistently with our \dynms and \measms with $d \in [-0.1, 1.5]$ based on 4 latest observation steps, and $\mon_2$ is \edit{analogous to $\mon_2$ for MC}. A conjunction of these assumptions proved sufficient for safety: $\compassn = \assn_1 \land \assn_2$. 

\looseness=-1
Our scenario is the UUV heading towards the pipe and needing to make a sharp left turn before $p_y < 10$. We sample $p_{y,0}$ between 12m and 22m and $\theta_0$ between 5 and 25 degrees, some of which violate $\assn_1$. We also introduce a $33\%$ chance of a fin getting stuck and limiting the turn rate, which violates $\assn_2$ and makes the UUV less likely to maintain safety. We collected 194 UUV executions and \edit{evaluated the safety predictions} in the first 20 seconds of each ($N=3880$).

\subsection{Results}

We experiment with two calibration settings: \edit{neutral} ($\lambda=0.5$) and \edit{conservative} ($\lambda=0.8$). In each, the monitors are evaluated with 50-50 cross-validation: \edit{tuned} on a randomly chosen half of the data and \edit{tested} on the rest. On the validation set, the monitors $\emon_1$, $\emon_2$ are \emph{individually} calibrated with Platt scaling and composed into product, weighted average (weights set inverse to the post-calibration variance), and the product-squared (the two-monitor version of power product). The other two methods use the \emph{joint} monitor-assumption data: logistic regression fits monitor outputs to \compassn, and  Bayes estimates $\prob(\mon_1, \mon_2 \mid \compassn)$ with histograms. 

\looseness=-1
We present the monitor evaluations in \Cref{tab:equal-results} ($\lambda = 0.5$) and \Cref{tab:cons-results} ($\lambda=0.8$). Each table contains the means and standard deviations of monitor errors and scores after 20 cross-validations. Notice two caveats: (i) $\mon_1$ and $\mon_2$ predict $\assn_1$ and $\assn_2$, not \compassn, (ii) since Bayes is not affected by $\lambda$, it is omitted from \Cref{tab:cons-results}. What follows is our observations and interpretations, mostly based on \Cref{tab:equal-results}. 

\textbf{Framework predicts safety}: compositional monitors show \eauc above $0.7$ in the MC case and above $0.8$ in the UUV case. \eecerr stays within $0.1-0.2$, \edit{explained in part} by the difficulty of calibrating $\mon_2$, which tends to take extreme values. \edit{The good composite calibration is supported by} the high safety relevance of our assumptions: on traces with $\neg \compassn$, \edit{the} safety chances \edit{are} $18\%$ for the MC and $6\%$ \edit{for} the UUV. Brier scores are relatively high because our ``true probabilities'' are 0s and 1s, which penalized the predictions in the middle of $[0,1]$ --- even when well-calibrated and discriminative.

\textbf{Compositions outperform  constituents:}  \edit{when predicting safety (\safe)}, all compositional monitors have a higher \eauc than $\mon_1$ and $\mon_2$, and most have a higher \eecerr. We note that compositions have higher uncertainty, which drives up their \ebrier scores.

\textbf{Data-driven compositions outperform the non-data-driven:} logistic regression shows the best \eecerr on both case studies, and Bayes dominates the \eauc \edit{in most cases}. This outcome is expected given their information advantage. Among the non-data-driven compositions, as predicted by our theorems, the product outperforms the weighted average across all metrics in \Cref{tab:equal-results}.

\textbf{Compositions are tunable for conservatism:} as per \Cref{tab:cons-results}, the product and product-squared compositions improved their \econserr with $\lambda=0.8$ in the UUV case, as expected. Logistic regression got more conservative in both cases studies, which motivates \edit{trying to mimic} its benefits without joint data. Surprisingly,  the product and product-squared did not respond to conservative tuning in the MC case. Our investigation showed that although $\mon_2$ got more conservative on average, its predictions with $\geq$90\% confidence got \emph{less} conservative. This suggests future work in  calibration techniques with conservative guarantees \edit{and using confidence monitoring as an early warning rather than a final arbiter of safety.} Product-squared failed to improve conservatism for the MC due to a similar reason: the overconfidence occurred in the bin [0.9, 1]. Generally, we observe a trade-off between conservatism and calibration, and given that weighted average is the least conservative composition, it predictably improves its performance from $\lambda=0.5$ to $\lambda=0.8$. %

\section{Discussion and Conclusion} \label{sec:discuss}

\edit{Our} theoretical and empirical investigation highlighted the distinct advantages of the proposed \edit{\coco} framework. First, unlike many assurance methods, it does not require a detailed or complete enumeration/model of hazards and failure modes. As long as the models and assumptions are safety-relevant, and the monitors are \edit{well-}calibrated and accurate, the composite monitor should detect any potential violation of safety. Second, it is not tied to specific closed-form distributions --- at most, our bounds only require the knowledge of the calibration bounds and monitor variance. Third, the non-data-driven compositions support independent development of monitors without the need for combined tuning. However, if joint \edit{monitor} data is available, compositions based on logistic regression and Bayesian estimation can take advantage of it. 

\looseness=-1
This paper is but an early step in compositional confidence monitoring, opening several exciting research directions. First, \edit{richer models of assumption dependencies (e.g., copulas~\cite{nelsen_introduction_2006}) may enable new composition functions (e.g., based on known odds or  correlations) and tighten the bounds on the existing ones}. Second, extending the composition bounds to 3+ monitors and new composition functions is \edit{mathematically} challenging, but it can be facilitated by detailed models of monitor dependencies. \edit{Another fruitful direction is empirical validation and refinement of the simplifications made in \Cref{sec:compconf}.} \edit{Since it is difficult to decompose assumptions, it would be useful to create verification techniques  that make granular assumptions bottom-up.}
Finally, a longer-term goal would be to extend the scope of a confidence monitor by considering its temporal behavior and monitoring the assumptions of other monitors~\cite{henzinger_monitorability_2020}.

To conclude, this paper presented an approach for run-time safety prediction by monitoring confidence in the assumptions of formal verification. %
The proposed \edit{\coco} framework introduced theoretical requirements for \edit{compositional} bounds on calibration errors, and instantiated these bounds for two composition functions. Two case studies demonstrated the \edit{practical} usefulness of \edit{\coco}. Furthermore, we observed that the compositions can adjust their conservatism, as well as improve their performance given synchronized data about multiple monitors and assumptions. %

\begin{acks}
\looseness=-1
\edit{The authors thank James Weimer, Sooyong Jang, and Northrop Grumman Corporation for helping develop the UUV simulator. This work was supported in part by  ARO W911NF-20-1-0080, AFRL and DARPA FA8750-18-C-0090, and ONR N00014-20-1-2744. Any opinions, findings and conclusions or recommendations expressed in this material are those of the authors and do not necessarily reflect the views of the Air Force Research Laboratory (AFRL), the Army Research Office (ARO), the Defense Advanced Research Projects Agency (DARPA), the Office of Naval Research (ONR) or the Department of Defense, or the United States Government.} 
\end{acks}

\bibliographystyle{ACM-Reference-Format}
\balance
\bibliography{refs.bib}

\appendix

\clearpage
\section*{Appendix}

This appendix contains the proofs of our theorems.

\section{Proof of Theorem~\ref{thm:safebound}}
\label{app:safebound}
\begin{proof}

Throughout this proof, we use the fact that {\psize$\statevec\not\in\statevecs(\sysmod) \implies \statevec\not\in\statevecs(\sysmod_\assn)$}, hence 
{\psize$\prob(\statevec\not\in\statevecs(\sysmod)) \leq \prob(\statevec\not\in\statevecs(\sysmod_\assn))$}.
{\psize
\begin{align*}
&\prob(\prop(\statevec) = \true \mid \statevec \not\in \statevecs(\sysmod_\assn)) \\
&\hspace{10pt}
=\prob(\prop(\statevec) = \true \land \statevec \in \statevecs(\sysmod) \mid \statevec \not\in \statevecs(\sysmod_\assn)) \\&\hspace{80pt}
+ \prob(\prop(\statevec) = \true \land \statevec\not\in\statevecs(\sysmod) \mid \statevec \not\in \statevecs(\sysmod_\assn))
\end{align*}
}

The first summand is bounded by $e_2$ because $\assn$ is safety-relevant: %
{\psize
\begin{align*}
    &\prob(\prop(\statevec) = \true \land \statevec \in \statevecs(\sysmod) \mid \statevec \not\in \statevecs(\sysmod_\assn)) \\
    &\hspace{10pt}= \prob(\prop(\statevec) = \true \mid  \statevec \in \statevecs(\sysmod) \land \statevec \not\in \statevecs(\sysmod_\assn))~\prob( \statevec \in \statevecs(\sysmod) \mid \statevec \not\in \statevecs(\sysmod_\assn)) \\
    &\hspace{10pt}\leq \prob(\prop(\statevec) = \true \mid  \statevec \in \statevecs(\sysmod) \land \statevec \not\in \statevecs(\sysmod_\assn)) \leq e_2
\end{align*}
}

For the second summand, we apply Bayes' theorem:
{\psize
\begin{align*}
&\prob(\prop(\statevec) = \true \land \statevec\not\in\statevecs(\sysmod) \mid \statevec\not\in\statevecs(\sysmod_\assn)) \\
&\hspace{10pt}=\frac{
\prob(\statevec\not\in\statevecs(\sysmod_\assn) \mid \prop(\statevec) = \true \land \statevec\not\in\statevecs(\sysmod))~\prob(\prop(\statevec) = \true \land \statevec\not\in\statevecs(\sysmod))}{\prob(\statevec\not\in\statevecs(\sysmod_\assn))} \\
&\hspace{10pt}= \frac{\prob(\prop(\statevec) = \true \land \statevec\not\in\statevecs(\sysmod))}{\prob(\statevec\not\in\statevecs(\sysmod_\assn))}
\end{align*}
}

Recall that {\psize$\statevec \in \statevecs(\sysreal)$} by the precondition of the theorem and thus {\psize$\prob(\prop(\statevec) = \true \mid \statevec\not\in\statevecs(\sysmod)) = \prob(\prop(\statevec) = \true \mid \statevec\not\in\statevecs(\sysmod),\statevec \in \statevecs(\sysreal))$} Then, by the safety relevance of \sysmod, the above fraction is bounded:
{\psize
\begin{align*}
\frac{\prob(\prop(\statevec) = \true \mid \statevec\not\in\statevecs(\sysmod)) \prob(\statevec\not\in\statevecs(\sysmod))}{\prob(\statevec\not\in\statevecs(\sysmod_\assn))} \leq 
e_1 \frac{\prob(\statevec\not\in\statevecs(\sysmod))}{\prob(\statevec\not\in\statevecs(\sysmod_\assn))} \leq e_1
\end{align*}
}
\end{proof}

\section{Proof of Lemma~\ref{lem:prob-assump}}
\label{app:prob-assump}
\begin{proof}
\looseness=-1
We will build up the bounds for the expression under the expectation. 
Suppose monitors $\mon_1$ and $\mon_2$ have probability densities $\den_1(x)$ and $\den_2(y)$.
From \mcerr bounds, integration, and our conditional independence of assumptions and monitors, for $\mon_1$ we get:
{\psize
\begin{align*}
    x - e_1 &\leq  \prob(\assn_1 \mid \mon_1 = x )  \leq x + e_1, \\
     \smallint_0^1 (x - e_1)\den_1(x \mid \compmon )dx &\leq \smallint_0^1  \prob(\assn_1 \mid \mon_1 = x ) \den_1(x \mid \compmon )dx \\
     &\leq \smallint_0^1 (x + e_1)\den_1(x \mid \compmon )dx, 
     \\
     \e[\mon_1 \mid \compmon] -e_1 &\leq \prob(\assn_1 \mid \compmon)   \leq   \e[\mon_1 \mid \compmon]+e_1 \labelthis{eq:pra1-bound}
\end{align*}
}
Analogously, for $\mon_2$:
{\psize
\begin{align*}
    \e[\mon_2 \mid \compmon] -e_2 \leq \prob(\assn_2 \mid \compmon )  \leq   \e[\mon_2 \mid \compmon]+e_2 \labelthis{eq:pra2-bound}
\end{align*}
}
\end{proof}

\section{Proof of Theorem~\ref{th:ece-product}}
\label{app:ece-product}
\begin{proof}
From conditional independence: 
{\psize
\begin{align*}
    \ecerr(\compmon, \assn_1 \land \assn_2) &= \e[|  \prob(\assn_1 \land \assn_2 \mid \compmon ) - \compmon |]) \\
    &= \e[|  \prob(\assn_1 \mid \compmon) \prob( \assn_2 \mid \compmon ) - \compmon |]) 
\end{align*}
}

We split the proof into three cases: 
{
\begin{enumerate}
    \item Event $H_1$: {\psize$\e[\mon_1 \mid \compmon] \geq e_1$} and  {\psize$\e[\mon_2 \mid \compmon] \geq e_2$}
    \item Event $H_2$: {\psize$\e[\mon_1 \mid \compmon] < e_1$} and  {\psize$\e[\mon_2 \mid \compmon] < e_2$}
    \item Event $H_3$: {\psize$\e[\mon_1 \mid \compmon] < e_1$} xor  {\psize$\e[\mon_2 \mid \compmon] < e_2$}
\end{enumerate}
}

Then the expectation can be split accordingly:
{\psize
\begin{align*}
    &\e[|  \prob(\assn_1 \land \assn_2 \mid \compmon ) - \compmon |]) \labelthis{eq:h-case-split} \\
    &\hspace{10pt}=
    \sum_{i=1}^3\prob(H_i)\e[|  \prob(\assn_1 \land \assn_2 \mid \compmon ) - \compmon | \mid H_i]  
\end{align*}
}

To complete the proof, with the help of Lemma~\ref{lem:prob-assump}, we need to show that we can bound the conditional expectation by at least one term in the max under each case, i.e., for each $i$
{\psize
\begin{align*}
    &\e[|  \prob(\assn_1 \land \assn_2 \mid \compmon ) - \compmon | \mid H_i] \labelthis{eq:prod-bound} \\ 
    &\hspace{10pt}=
    \max [4e_1e_2, \sqrt{\var[\mon_1]\var[\mon_2]} + e_1 + e_2 + e_1e_2]
\end{align*}
}
For the sake of brevity, let {\psize$\cex{1} := \e[\mon_1 \mid \compmon]$} and {\psize$\cex{2} := \e[\mon_2 \mid \compmon]$}.

\textbf{Case $H_1$}:

Our $H_1$ restrictions allows multiplying inequalities \edit{\eqref{eq:pra1-bound} and \eqref{eq:pra2-bound}}
because all sides are non-negative: 
{\psize
\begin{align*}
     (\cex{1} -e_1)(\cex{2} -e_2) \leq \prob(\assn_1 \mid \compmon ) \prob(\assn_2 \mid \compmon ) \leq (\cex{1} +e_1)(\cex{2} +e_2),
\end{align*}
\begin{align*}
\intertext{ and then subtract \compmon:}
     (\cex{1} -e_1)(\cex{2} -e_2)  - \compmon &\leq \prob(\assn_1 \mid \compmon ) \prob(\assn_2 \mid \compmon ) - \compmon \\
     &\leq (\cex{1} +e_1)(\cex{2} +e_2) - \compmon
\end{align*}
}
     Taking the absolute value, using max inequality, and triangle inequality we get a bound on the expression under the expectation: 
{\psize
\begin{align*}
    \label{eq:pos-H-start}
     &|\prob(\assn_1 \mid \compmon ) \prob(\assn_2 \mid \compmon ) - \compmon |  \\ 
    &\hspace{10pt}\leq \max \Big[ | (\cex{1} -e_1)(\cex{2} - e_2) - \compmon|, |(\cex{1} +e_1)(\cex{2} +e_2) - \compmon|\Big] %
\end{align*}
}
    
Now we use two facts: {\psize $\compmon=\mon_1 \mon_2$} and {\psize $\e[\mon_1]\e[\mon_2]  = \psize \e[\mon_1\mon_2] ~- $} \\{\psize $\cov[\mon_1, \mon_2]$}. Then, we can proceed with the triangle inequality:
\edit{
{\psize 
\begin{align*}
     &\max \Big[ | (\cex{1} -e_1)(\cex{2} -e_2) - \compmon|, |(\cex{1} +e_1)(\cex{2} +e_2) - \compmon|\Big] 
     \\ 
     &\hspace{10pt}= \max \Big[ |\cov[\mon_1, \mon_2|\compmon]- e_1e_2 + e_1\cex{2} + e_2\cex{1} | , \\
     &\hspace{80pt} |\cov[\mon_1, \mon_2|\compmon] - e_1e_2 - e_1\cex{2} -  e_2\cex{1} | \big]  
     \\ 
     &\hspace{10pt}\leq |\cov[\mon_1, \mon_2|\compmon]| + e_1\cex{2} + e_2\cex{1} + e_1e_2 
\end{align*}}}
Using the Cauchy-\edit{Schwarz} inequality, our assumption on variances, and the fact that {\psize$\cex{1} \le 1, \cex{2} \le 1$} we get the final bound under Case $H_1$:  
{\psize
\begin{align*}
    &|\cov[\mon_1,  \mon_2|\compmon]| +  e_1\cex{2} + e_2\cex{1} + e_1e_2 
    \\
    & \hspace{60pt}
    \leq \sqrt{\var[\mon_1]\var[\mon_2]} + e_1 + e_2 + e_1e_2
\end{align*}
}

\textbf{Case $H_2$:} %
Recalling that {\psize $M_C = M_1M_2$}, note that
{\psize
\begin{align}
    M_C \leq \cex{1} < e_1,
\end{align}
}
since {\psize $M_1 \ge M_C$} everywhere. Also note that~\eqref{eq:pra1-bound} now becomes
{\psize
\begin{align}
    0 \le \prob(\assn_1 \mid \compmon)   \leq   2e_1.
\end{align}
}
Similarly,~\eqref{eq:pos-H-start} becomes
{\psize
\begin{align*}
|\prob(\assn_1 \mid \compmon ) \prob(\assn_2 \mid \compmon ) - \compmon | &\le \max[|-M_C|, |4e_1e_2 - M_C|] \\
&\le \max[e_1, e_2, 4e_1e_2]. 
\end{align*}
}
Note that both $e_1$ and $e_2$ are smaller than the bound under $H_1$, hence we only keep $4e_1e_2$ in the final bound.

\textbf{Case $H_3$}: 
Without loss of generality, consider the case when {\psize$\cex{1} < e_1$} and {\psize$\cex{2} \ge e_2$}. Then once again~\eqref{eq:pra1-bound} becomes
{\psize
\begin{align}
    0 \le \prob(\assn_1 \mid \compmon)   \leq   2e_1.
\end{align}
}
The bound in~\Cref{eq:pos-H-start} is now simplified \edit{only on one side}:
\edit{
{\psize
\begin{align*}
&|\prob(\assn_1 \mid \compmon ) \prob(\assn_2 \mid \compmon ) - \compmon | \\
&\hspace{60pt}\leq \max[|-M_C|, |(\cex{1} +e_1)(\cex{2} +e_2) - M_C|] \\ 
&\hspace{60pt}\leq \max [e_1, \sqrt{\var[\mon_1]\var[\mon_2]} + e_1 + e_2 + e_1e_2].
\end{align*}
}}
The case {\psize$\cex{1} \ge e_1$} and {\psize$\cex{2} < e_2$} is symmetric.
\end{proof}

\section{Proof of Theorem~\ref{th:ece-weighted}}
\label{app:ece-weighted}
\balance
\begin{proof}
Following the same structure as Theorem~\ref{th:ece-product}, the proof is split into three cases: 
{
\begin{enumerate}
    \item Event $H_1$: {\psize$\e[\mon_1 \mid \compmon] \geq e_1$} and  {\psize$\e[\mon_2 \mid \compmon] \geq e_2$}
    \item Event $H_2$: {\psize$\e[\mon_1 \mid \compmon] < e_1$} and  {\psize$\e[\mon_2 \mid \compmon] < e_2$}
    \item Event $H_3$: {\psize$\e[\mon_1 \mid \compmon] < e_1$} xor  {\psize$\e[\mon_2 \mid \compmon] < e_2$}
\end{enumerate}
}
Then we again split the expectation into \eqref{eq:h-case-split}.
To complete the proof, with the help of Lemma~\ref{lem:prob-assump}, we need to show that we can bound the conditional expectation by at least one term in the max under each case.
For the sake of brevity, let {\psize$\cex{1} := \e[\mon_1 \mid \compmon]$} and {\psize$\cex{2} := \e[\mon_2 \mid \compmon]$}.

\textbf{Case $H_1$ for averaging:}%
{\psize
\begin{align*}
&\max\big[| (\cex{1} -e_1)(\cex{2} -e_2) - \compmon|, |(\cex{1} +e_1)(\cex{2} +e_2) - \compmon| \big] \\
&\hspace{10pt}= \max\big[| (\cex{1} -e_1)(\cex{2} -e_2) - w_1\cex{1} - w_2\cex{2}|,  \\
&\hspace{80pt} |(\cex{1} +e_1)(\cex{2} +e_2) - w_1\cex{1} - w_2\cex{2}| \big]
\end{align*}
}

Now note that max[a, |b|] = max[a, b, -b], and also note that in our case, a > -b:
{\psize
\begin{align*}
 \max&\big[ w_1\cex{1} + w_2\cex{1} - (\cex{1} - e_1)(\cex{2} - e_2) , \\
 &\hspace{80pt}|(\cex{1} +e_1)(\cex{2} +e_2) - w_1\cex{1} - w_2\cex{2}| \big] \\
 &= \max\big[w_1\cex{1} + w_2\cex{2} - (\cex{1} - e_1)(\cex{2} - e_2) , \\
 &\hspace{80pt}(\cex{1} +e_1)(\cex{2} +e_2) - w_1\cex{1} - w_2\cex{2} \big] \\
 &= \max\big[w_1\cex{1} + w_2\cex{2} - \cex{1}\cex{2} + e_2\cex{1} + e_1\cex{2} - e_1e_2,  \\
 &\hspace{80pt}\cex{1}\cex{2} - w_1\cex{1} - w_2\cex{2} + e_2\cex{1} + e_1\cex{2} + e_1e_2 \big] 
\end{align*}
}

The final step is to recognize that {\psize$w_1 a + w_2 b - ab \in [0, \max[w_1,w_2]]$}. This can be revealed by solving an optimization problem. Thus, we can bound the first expression above with:
{\psize
\begin{align*}
\max[\max[w_1,w_2] + e_1 +e_2-e_1e_2, e_1 + e_2 + e_1e_2]
\end{align*}
}

\textbf{Case $H_2$ for averaging:} 
Here, we are bounded by 
{\psize $$\max\big[| \compmon|, 
|(\cex{1} +e_1)(\cex{2} +e_2) - \compmon| \big],$$
}
which is itself bounded by \compmon since the values in the second argument are both positive. Hence, {\psize$\compmon= w_1\cex{1} + w_2\cex{2} \leq  w_1e_1 + w_2e_2 $}. This bound is dominated by  {\psize$e_1 + e_2 + e_1e_2$} from case $H_1$.

\textbf{Case $H_3$ for averaging:} Suppose {\psize$\cex{1} \leq e_1, \cex{2} \geq e_2$}. Then we need to bound: 
{\psize
\begin{align*}
    \max\big[| \compmon|, |(\cex{1} +e_1)(\cex{2} +e_2) - \compmon| \big] %
    = \compmon %
    = w_1\cex{1} + w_2\cex{2} \leq w_1e_1 + e_2
\end{align*}
}
Analogously, in the alternative case the bound is {\psize$e_1 + w_2e_2$}. Both of these bounds are lower than {\psize$e_1 + e_2 + e_1e_2$} from case $H_1$. 

Similar to the product bound proof, the ultimate bound is: 
{\psize
\begin{align*}
    \max[e_1 + e_2 + e_1e_2, \max[w_1,w_2] + e_1 +e_2-e_1e_2 ]
\end{align*}
}
\end{proof}

\section{Proof of Theorem~\ref{th:cce-product}}
\label{app:cce-product}
\begin{proof}
When {\psize$\compmon = x = \mon_1 \mon_2$}, it is clear that {\psize$\mon_1 \geq x, \mon_2 \geq x$} because monitors take values between 0 and 1. It follows that: 
{\psize
\begin{align*}
    \e[\mon_1 \mid \mon_1 \mon_2 = x] \geq x, \hspace{30pt} \e[\mon_2 \mid \mon_1 \mon_2 = x] \geq x
\end{align*}
}

Following Lemma~\ref{lem:prob-assump}, we can see that: 
{\psize
\begin{align*}
    \prob(\assn_1 \mid \compmon = x) \geq \e[\mon_1 \mid \mon_1 \mon_2 = x] - e_1 \geq x - e_1 \\ 
    \prob(\assn_2 \mid \compmon = x) \geq \e[\mon_1 \mid \mon_1 \mon_2 = x] - e_2 \geq x - e_2
\end{align*}
}

To ensure multiplicability: 
{\psize
\begin{align*}
     \prob(\assn_1 \mid \compmon = x) \geq \max[0, x - e_1] \\
    \prob(\assn_2 \mid \compmon = x) \geq \max[0, x - e_2] 
\end{align*}
}

Therefore:
{\psize
\begin{align*}
    \prob(\assn_1 \land \assn_2 \mid \compmon = x) \geq \max[0, x - e_1] \max[0, x - e_2]
\end{align*}
}
\end{proof}

\section{Proof of Theorem~\ref{th:ece-comp-bounds}}
\label{app:ece-comp-bounds}
\begin{proof}

{\psize
\begin{align*}
&\ecerr(\compmon, \safe) = \e[|\prob(\safe \mid \compmon) - \compmon|] \\
&\hspace{5pt}=\e[|\prob(\safe \mid \compmon) - \compmon|] - \ecerr(\compmon,\compassn) + \ecerr(\compmon,\compassn) \\
&\hspace{5pt}=\e[|\prob(\safe \mid \compmon) - \compmon|] - \e[|\prob(\compassn \mid \compmon) - \compmon|] + \ecerr(\compmon,\compassn) \\
&\hspace{5pt}\leq \e[|\prob(\safe \mid \compmon) - \prob(\compassn \mid \compmon)|] + \ecerr(\compmon,\compassn)
\end{align*}
}
The second summand is bounded by $e_3$ by the condition of the theorem. Consider now the first summand:
{\psize
\begin{align*}
    &\e[|\prob(\safe \mid \compmon) - \prob(\compassn \mid \compmon)|] \\
    &\hspace{40pt}= \e[|\prob(\safe \land \compassn \mid \compmon) + \prob(\safe \land \neg\compassn \mid \compmon) \\
    &\hspace{80pt}- \prob(\compassn \land \safe \mid \compmon) - \prob(\compassn \land \neg\safe \mid \compmon)|] \\
    &\hspace{40pt}= \e[|\prob(\safe \land \neg\compassn \mid \compmon)  - \prob(\compassn \land \neg\safe \mid \compmon)|]
\end{align*}
}
\edit{Note that {\psize$\prob(\compassn \land \neg\safe \mid \compmon) = 0$} because $\compassn$ is}
sufficient for proving safety and thus $\compassn$ and $\neg\safe$ are mutually exclusive, leaving us with: %
{\psize
\begin{align*}
\e[|\prob(\safe \land \neg\compassn \mid \compmon)|] = \prob(\safe \land \neg\compassn) 
&=\prob(\safe \mid \neg\compassn) \prob(\neg\compassn) \\
&\leq \prob(\safe \mid \neg\compassn) \leq e_1 + e_2,
\end{align*}
}
\edit{where the last bound $e_1 + e_2$ is a result of Theorem~\ref{thm:safebound}.}
\end{proof}

\section{Proof of Theorem~\ref{th:cce-comp-bounds}}
\label{app:cce-comp-bounds}

\edit{
\begin{proof}
{\psize
\begin{align*}
    &\max_{x \in [0,1]} [ x- \prob(\safe \mid \compmon = x)]\\
&\hspace{10pt}=\max_{x \in [0,1]} [ x - \prob(\assn \mid \compmon = x) \prob(\safe \mid \assn, \compmon = x) \\
&\hspace{80pt}-\prob(\neg \assn \mid \compmon = x) \prob(\safe \mid \neg \assn, \compmon = x)]\\
&\hspace{10pt}\leq \max_{x \in [0,1]} [ x - \prob(\assn \mid \compmon = x)]\leq e 
\end{align*}}
We are justified in using the upper bound {\psize $\max_{x \in [0,1]} [ x - \prob(\assn \mid \compmon = x)]$} as a conservative approximation because (i) \assn is sufficient, so {\psize$\prob(\safe \mid \assn, \compmon = x) = 1$}, and (ii) the last summand {\psize $\prob(\neg \assn \mid \compmon = x) \cdot \prob(\safe \mid \neg \assn, \compmon = x)$} is non-negative. 
\end{proof}
}

\end{document}